 \definecolor{BLACK}{gray}{0}
 \definecolor{WHITE}{gray}{1}
 \definecolor{RED}{rgb}{1,0,0}
 \definecolor{GREEN}{rgb}{0,1,0}
 \definecolor{BLUE}{rgb}{0,0,1}
 \definecolor{CYAN}{cmyk}{1,0,0,0}
 \definecolor{MAGENTA}{cmyk}{0,1,0,0}
 \definecolor{YELLOW}{cmyk}{0,0,1,0}
\newtheorem{theorem}{Theorem}
\newtheorem{corollary}[theorem]{Corollary}
\newtheorem{proposition}[theorem]{Proposition}
\newtheorem{example}[theorem]{Example}
\newtheorem{remark}[theorem]{Remark}
\newtheorem{lem}[theorem]{Lemma}
\newtheorem{definition}[theorem]{Definition}
\newenvironment{proof}[1][Proof]{\noindent\textit{#1.} }{\ \rule{0.5em}{0.5em}}
\newenvironment{sproof}[1][Sketch of a proof]{\noindent\textit{#1.} }{\ \rule{0.5em}{0.5em}}
\begin{document}

\title{On the edge of the set of no-signaling assemblages}
\author{Micha{\l} Banacki}
\affiliation{International Centre for Theory of Quantum Technologies, University of Gda\'{n}sk, Wita Stwosza 63, 80-308 Gda\'{n}sk, Poland}
\affiliation{Institute of Theoretical Physics and Astrophysics, Faculty of Mathematics, Physics and Informatics, University of Gda\'{n}sk, Wita Stwosza 57, 80-308 Gda\'{n}sk, Poland}
\author{Ricard Ravell Rodr{\'i}guez}
\affiliation{International Centre for Theory of Quantum Technologies, University of Gda\'{n}sk, Wita Stwosza 63, 80-308 Gda\'{n}sk, Poland}
\author{Pawe{\l} Horodecki}
\affiliation{International Centre for Theory of Quantum Technologies, University of Gda\'{n}sk, Wita Stwosza 63, 80-308 Gda\'{n}sk, Poland}
\affiliation{Faculty of Applied Physics and Mathematics, National Quantum Information Centre, Gda\'{n}sk University of Technology, Gabriela Narutowicza 11/12, 80-233 Gda\'{n}sk, Poland} 
%%%%%%%%%%%%%%%%%%%%%

\begin{abstract}

Following recent advancements, we consider a scenario of multipartite postquantum steering and general no-signaling assemblages. We introduce the notion of the edge of the set of no-signaling assemblages and we present its characterization. Next, we use this concept to construct witnesses for no-signaling assemblages without an LHS model. Finally, in the simplest nontrivial case of steering with two untrusted subsystems, we discuss the possibility of quantum realization of assemblages on the edge. In particular, for three-qubit states, we obtain a no-go type result, which states that it is impossible to produce assemblage on the edge using measurements described by POVMs as long as the rank of a given state is greater than or equal to 3.
\end{abstract}

%%%%%%%%%%%%%%%%%%%%%

\keywords{Quantum steering, No-signaling assemblages, Postquantum steering}

\maketitle

\section{Introduction}

Quantum theory provides us with phenomena going beyond any classical description or intuition. The most striking example of this statement is the possibility of obtaining correlations that cannot be explained by any local and realistic theory \cite{EPR1935, Bell}. Another non-classical phenomenon of quantum mechanics is encapsulated in the idea of quantum steering, proposed by von Neumann \cite{S36} and reintroduced in a modern formulation in \cite{WJD07}. The mathematical description of multipartite steering is given by the notion of assemblage consisting of subnormalized states describing subsystem of a chosen party, indexed according to measurements performed by other parties, and fulfilling a set of no-signaling constraints. Forgetting about the quantum nature of this collection of subnormalized states, one can define a notion of an abstract no-signaling assemblage based only on no-signaling constraints \cite{SBCSV15}. As it is known that not all no-signaling assemblages admit quantum realization (they do not come from a steering scenario described by the rules of quantum mechanics), there is room for the idea of postquantum steering. While our knowledge of the set of no-signaling assemblages has been vastly expanded \cite{SAPHS18,HS18}, the structural relations between various convex subsets of this set (set of assembles which admit quantum realization or an LHS model) are not fully understood.

This paper aims to discuss this structural relationship, following the general idea of the edge state \cite{BCH}, defined concerning a given convex subset of a considered set of quantum states. We introduce a similar concept in the setting of postquantum steering. We characterize its structural properties and we provide some applications.

In Section \ref{sII} we recall the notion of a multipartite quantum steering and a generalized concept of no-signaling assemblage. Moreover, in Section \ref{sIII}, we introduce a definition of the edge of no-signaling assemblages and present its characterization. In Section \ref{sIV} we formulate a notion of witnesses for no-signaling assemblages which do not admit an LHS model. In particular, we provide construction of witnesses starting from particular edge assemblages. Section \ref{sV} is dedicated to the problem of quantum realization of assemblages on the edge of no-signaling, in the simplest tripartite steering scenario with two uncharacteristic subsystems (each with binary settings and outcomes). Finally, Section \ref{sVI} consists of short discussion.

\section{No-signaling assemblages}\label{sII}

Nowadays the idea of quantum steering \cite{S36,WJD07} has become more relevant \cite{CS17, UCNG20} -  for example as a tool for certification of entanglement in a setting with untrusted parties. Recently, a typical bipartite scenario of quantum steering has been generalized to accommodate many uncharacterized (untrusted) systems \cite{CS15} and introduce assemblages even beyond quantum description \cite{SBCSV15, SAPHS18, HS18, SHSA20}.

Consider a (n+1)-partite steering scenario in which n distant untrusted subsystems $A_i$ share a quantum state $\rho$ with the distant trusted subsystem $B$ (described by $d$ dimensional Hilbert space) and subnormalized states of subsystem B conditioned upon uncharacterized measurements on subsystems $A_i$ are given by 
\begin{equation}\label{assemblage}
\sigma_{\textbf{a}|\textbf{x}}=\mathrm{Tr}_{A_1,\ldots, A_n}(M^{(1)}_{a_1|x_1}\otimes\ldots \otimes M^{(n)}_{a_n|x_n}\otimes \mathds{1}\rho)
\end{equation}where $\textbf{a}|\textbf{x}=a_1\ldots a_n|x_1\ldots x_n$ and any $M^{(i)}_{a_i|x_i}$ is an element of POVM (positive operator value measure) corresponding to the measurement outcome $a_i\in \left\{0,\ldots, \mathcal{A}_i-1\right\} $ of the measurement setting $x_i\in \left\{0,\ldots, \mathcal{X}_i-1\right\}$ related to subsystem $A_i$. We define \textit{quantum assemblage} as any collection of subnormalized states on $d$ dimensional space $\Sigma=\left\{\sigma_{\textbf{a}|\textbf{x}}\right\}_{\textbf{a},\textbf{x}}$, which can be obtained according to expression (\ref{assemblage}). Note that $\Sigma$ fulfills a set of no-signaling constraints, which may be used to formulate an abstract notion of \textit{no-signaling assemblage} by forgetting about origin of $\Sigma$ based on measurements performed on a quantum state.

\begin{definition}\label{NS-def} A \textit{no-signaling assemblage} $\Sigma=\left\{\sigma_{\textbf{a}|\textbf{x}}\right\}_{\textbf{a},\textbf{x}}$ is a collection of subnormalized states $\sigma_{\textbf{a}|\textbf{x}}$ acting on $d$ dimensional Hilbert space, for which
\begin{equation}\label{def11}
\forall_{x_1,\ldots, x_n} \sum_{a_1,\ldots, a_n} \sigma_{\textbf{a}|\textbf{x}}=\rho_B,
\end{equation}where $\rho_B$ is a state, and for any possible subset of indexes $I=\left\{i_1,\ldots, i_s\right\}$ with $1\leq s<n$
\begin{equation}\label{def12}
\forall_{x_1,\ldots, x_n} \sum_{a_j,j\notin I} \sigma_{\textbf{a}|\textbf{x}}=\sigma_{a_{i_1}\ldots a_{i_s}|x_{i_1}\ldots x_{i_s}}.
\end{equation}
\end{definition}It has been already shown \cite{G89,HJW93}, that any such assemblage for $n=1$ admits a quantum realization like in (\ref{assemblage}). Nevertheless, for $n>1$ this is no longer the case and the set of quantum assemblages is then a nontrivial subset of the set of all no-signaling assemblages abstractly defined as in Definition \ref{NS-def}. 

Inside this convex subset of quantum assemblages one can single out another nontrivial convex subset representing the steering scenarios with classically correlated systems \cite{SAPHS18} - the set of LHS (local hidden state) assemblages. We say that a no-signaling assemblage admits an \textit{LHS model} if it can be represented by
\begin{equation}\label{LHS}
\sigma_{\textbf{a}|\textbf{x}}=\sum_j q_j \prod_i^np^{(A_i)}_j(a_i|x_i)\rho_j
\end{equation}where $q_j\geq 0, \sum_j q_j=1$, $\rho_j$ are some states of characterized subsystem B and $p^{(A_i)}_i(a_i|x_i)$ denotes conditional probability distributions for each of uncharacterized subsystems $A_i$ respectively. Note that any LHS assemblage (\ref{LHS}) can be also described by $\sigma_{\textbf{a}|\textbf{x}}=\sum_j q_j p_j(\textbf{a}|\textbf{x})\rho_j$ where each $L_j=\left\{p_j(\textbf{a}|\textbf{x})\right\}_{\textbf{a},\textbf{x}}$ denotes a deterministic conditional probability distribution (i.e. extremal point in the polytope of local correlations). Indeed, this follows from the fact that for any $i$ all $p^{(A_i)}_i(a_i|x_i)$ can be realized as convex combinations of deterministic distributions (on a single party).

From now on we will omit subscript notation and we will write no-signaling assemblage simply as $\Sigma=\left\{\sigma_{\textbf{a}|\textbf{x}}\right\}$. For further convenience, for any no-signaling box $P=\left\{p(\textbf{a}|\textbf{x})\right\}$, we define set of indexes $I_P=\left\{\textbf{a}|\textbf{x}:p(\textbf{a}|\textbf{x})\neq 0\right\}$. Note that if $\textbf{a}|\textbf{x}=a_1\ldots a_n|x_1\ldots x_n$ where $a_i\in \left\{0,\ldots, \mathcal{A}_i-1\right\}$ and $x_i\in \left\{0,\ldots, \mathcal{X}_i-1\right\}$, then the cardinality of $I_L$ is equal to $|I_L|=\prod_{i}^n \mathcal{X}_i$, when $L$ denotes some local deterministic box ($I_L$ consists of indexes of positions in $L$ occupied by probabilities equal to $1$). Finally, let $R_{\textbf{a}|\textbf{x}}$ stands for projection on the image of $\sigma_{\textbf{a}|\textbf{x}}$, i.e. $\mathrm{Im}(\sigma_{\textbf{a}|\textbf{x}})$ .

\section{Edge of the set of no-signaling assemblages}\label{sIII}

Let $S_2$ denotes the set of all no-signaling assemblages $\Sigma=\left\{\sigma_{\textbf{a}|\textbf{x}}\right\}$ related to the scenario with fixed number of untrusted parties and with fixed numbers of labels from settings and outcomes. Let $S_1$ denotes its subset consisting of all assemblages which are admitting an LHS model. Note that both sets are convex and compact. Following the general spirit considered in \cite{BCH}, we introduce the following definition.

\begin{definition}\label{def_edge}
We say that $\Sigma\in S_2$ is on the edge of the set of no-signaling assemblages if for any convex decomposition $\Sigma=\epsilon\Sigma_1+(1-\epsilon)\Sigma_2$ where $\Sigma_1\in S_1$, $\Sigma_2\in S_2$, we have $\epsilon=0$.
\end{definition}
\begin{figure}[H]
\includegraphics[width=0.45\textwidth]{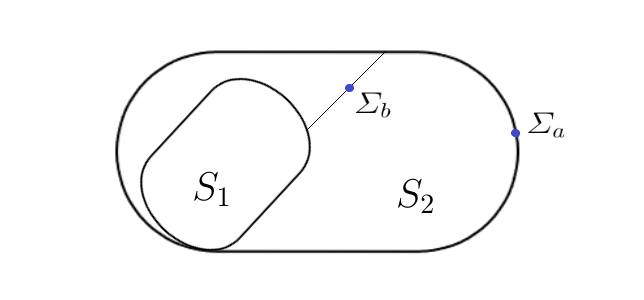}
\caption{Schematic representation of the edge - $\Sigma_a$ is on the edge and $\Sigma_b$ is not.}
\end{figure}
Observe that any extremal no-signaling assemblage is either LHS or it is on the edge, but not all edge assemblages are extremal (see example \ref{ex}).
Note that assemblage $\Sigma$ is not on the edge if and only if there exists $\epsilon>0$, some pure state $|\psi\rangle \langle \psi|$ and a deterministic box $L=\left\{ p(\textbf{a}|\textbf{x})\right\}$ such that all operators $\tilde{\sigma}_{\textbf{a}|\textbf{x}}=\sigma_{\textbf{a}|\textbf{x}}-\epsilon p(\textbf{a}|\textbf{x})|\psi\rangle \langle \psi|$ are positive. If this is the case, we will say that it is possible to subtract an LHS part from a given assemblage $\Sigma$. It is so because collection $\left\{\tilde{\sigma}_{\textbf{a}|\textbf{x}}\right\}$ by definition fulfills no-signaling conditions, the only obstruction for subtraction of an LHS part is the fact that operator related to the given position $\textbf{a}|\textbf{x}$ will no longer be positive.

Note that assemblages on the edge are precisely that which admit the maximal value of so-called steerable (steering) weight \cite{SNC04,CS17}. Therefore, the membership question regarding the edge of the set of no-signaling assemblages can be stated in the form of the SDP problem. Here we present an analytical procedure that determines an answer for the evoked question (without the SDP approach). To do this, we invoke the following lemmas.

\begin{lem}\label{lemma1}
Consider a positive operator $\sigma\in M_d(\mathbb{C})_{+}$ and a pure state $|\psi\rangle\langle\psi|\in M_d(\mathbb{C})_{+}$, then $\sigma - \epsilon |\psi\rangle\langle\psi| \geq 0$ for some $\epsilon>0$ if and only if $|\psi\rangle \in \mathrm{Im}(\sigma)$.
\end{lem}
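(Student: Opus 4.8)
The plan is to prove both directions of the equivalence, with the forward direction being straightforward and the reverse direction requiring the main work.

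For the forward direction, suppose $\sigma - \epsilon|\psi\rangle\langle\psi|\geq 0$ for some $\epsilon>0$. I would argue directly that $|\psi\rangle\in\mathrm{Im}(\sigma)$ by using the kernel. Let $|\phi\rangle\in\ker(\sigma)$, so $\langle\phi|\sigma|\phi\rangle=0$. Positivity of $\sigma-\epsilon|\psi\rangle\langle\psi|$ gives $\langle\phi|(\sigma-\epsilon|\psi\rangle\langle\psi|)|\phi\rangle=0-\epsilon|\langle\phi|\psi\rangle|^2\geq 0$, which with $\epsilon>0$ forces $\langle\phi|\psi\rangle=0$. Hence $|\psi\rangle$ is orthogonal to $\ker(\sigma)$, and since $\sigma$ is self-adjoint we have $\mathrm{Im}(\sigma)=\ker(\sigma)^{\perp}$, so $|\psi\rangle\in\mathrm{Im}(\sigma)$.

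For the reverse direction, assume $|\psi\rangle\in\mathrm{Im}(\sigma)$; I want to produce an explicit $\epsilon>0$. The natural approach is to restrict attention to the subspace $V=\mathrm{Im}(\sigma)$, on which $\sigma$ acts as a strictly positive (invertible) operator $\sigma|_V$. Since $|\psi\rangle\in V$, I can define $\epsilon$ in terms of the smallest eigenvalue of $\sigma|_V$ together with the norm of $|\psi\rangle$; concretely, it suffices to choose any $\epsilon$ with $0<\epsilon\leq \lambda_{\min}(\sigma|_V)/\langle\psi|\psi\rangle$ when $|\psi\rangle$ is taken normalized, $\epsilon\leq\lambda_{\min}$. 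I would then verify positivity by splitting an arbitrary vector into its components in $V$ and $V^{\perp}=\ker(\sigma)$: vectors in the kernel are annihilated by both $\sigma$ and (since $|\psi\rangle\in V$) by $|\psi\rangle\langle\psi|$, so they contribute nothing, and on $V$ the operator $\sigma-\epsilon|\psi\rangle\langle\psi|$ is controlled by comparing $\epsilon|\psi\rangle\langle\psi|\leq\epsilon\,\un\leq\sigma|_V$ in the appropriate sense.

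The main obstacle is making the reverse direction rigorous when cross terms between $V$ and $V^{\perp}$ appear: one must confirm that $|\psi\rangle\langle\psi|$ genuinely has no support on $\ker(\sigma)$, so that the decomposition of the quadratic form over $V\oplus V^{\perp}$ has no mixed contributions and the estimate on $V$ suffices. The cleanest way I would handle this is to note that both $\sigma$ and $|\psi\rangle\langle\psi|$ map $V$ into $V$ and vanish on $V^{\perp}$ (the latter precisely because $|\psi\rangle\in V$), so the whole operator $\sigma-\epsilon|\psi\rangle\langle\psi|$ is block-diagonal with respect to $V\oplus V^{\perp}$, reducing positivity to positivity of the $V$ block, which follows from the eigenvalue bound.
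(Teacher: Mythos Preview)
Your argument is correct in both directions. Note, however, that the paper does not actually supply a proof of this lemma; it is stated as a standard fact and then immediately used. So there is nothing to compare against: your write-up simply fills in the omitted elementary verification, and the block-diagonal reduction with $\epsilon\leq\lambda_{\min}(\sigma|_{\mathrm{Im}(\sigma)})$ is the natural way to do it.
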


By Lemma \ref{lemma1} and previous discussion it is obvious that one can subtract an LHS part from a given assemblage $\Sigma$ if and only if there exists a local deterministic box $L$ and a normalized vector $|\psi\rangle$ such that $|\psi\rangle \in \bigcap_{\textbf{a}|\textbf{x}\in I_L}\mathrm{Im}(\sigma_{\textbf{a}|\textbf{x}})$.
\begin{remark}\label{rem}
 Note that because of that, in the particular
 case of $d=2$, one can subtract an LHS part from assemblage $\Sigma$ if and only if there exists $L$ such that $\left\{\sigma_{\textbf{a}|\textbf{x}}: \textbf{a}|\textbf{x}\in I_L\right\}$ does not contain $0$ or two elements proportional to different rank one states.
\end{remark}

The above observation may be reformulated as the next lemma.

\begin{lem}\label{protocol}
An LHS part can be subtracted from assemblage $\Sigma$ if and only there exists a local deterministic box $L$ such that $\mathrm{det}\left(\prod_{\textbf{a}|\textbf{x}\in I_L}R_{\textbf{a}|\textbf{x}} - \mathds{1}\right)= 0$.
\end{lem}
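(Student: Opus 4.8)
The plan is to reduce the claimed equivalence to a spectral property of a product of orthogonal projections. By Lemma~\ref{lemma1} together with the discussion preceding it, an LHS part can be subtracted from $\Sigma$ precisely when there is a local deterministic box $L$ and a unit vector $|\psi\rangle \in \bigcap_{\textbf{a}|\textbf{x}\in I_L} \mathrm{Im}(\sigma_{\textbf{a}|\textbf{x}})$. Since $R_{\textbf{a}|\textbf{x}}$ is the orthogonal projection onto $\mathrm{Im}(\sigma_{\textbf{a}|\textbf{x}})$, membership $|\psi\rangle\in\mathrm{Im}(\sigma_{\textbf{a}|\textbf{x}})$ is equivalent to $R_{\textbf{a}|\textbf{x}}|\psi\rangle=|\psi\rangle$. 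Hence, fixing $L$ and writing $P=\prod_{\textbf{a}|\textbf{x}\in I_L}R_{\textbf{a}|\textbf{x}}$, it suffices to show that a nonzero vector fixed by every $R_{\textbf{a}|\textbf{x}}$ with $\textbf{a}|\textbf{x}\in I_L$ exists if and only if $\det(P-\mathds{1})=0$, and then to quantify over all boxes $L$.

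The forward implication is immediate: if $R_{\textbf{a}|\textbf{x}}|\psi\rangle=|\psi\rangle$ for every factor, then applying the factors one after another leaves $|\psi\rangle$ unchanged, so $P|\psi\rangle=|\psi\rangle$, the value $1$ is an eigenvalue of $P$, and therefore $\det(P-\mathds{1})=0$. For the reverse implication I would start from $\det(P-\mathds{1})=0$, which guarantees a nonzero $|\psi\rangle$ (possibly complex, which is harmless since we work over $M_d(\mathbb{C})$) with $P|\psi\rangle=|\psi\rangle$, and then argue that such a $|\psi\rangle$ must in fact be fixed by each individual projection.

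The crux lies in this reverse step. The key observation is that each orthogonal projection is a contraction obeying $\||\psi\rangle\|^2=\|R_{\textbf{a}|\textbf{x}}|\psi\rangle\|^2+\|(\mathds{1}-R_{\textbf{a}|\textbf{x}})|\psi\rangle\|^2$, so that $\|R_{\textbf{a}|\textbf{x}}|\psi\rangle\|\le\||\psi\rangle\|$ with equality if and only if $R_{\textbf{a}|\textbf{x}}|\psi\rangle=|\psi\rangle$. Consequently $P$, being a product of such contractions, is itself norm-nonincreasing. I would make the argument explicit by peeling off the projections one at a time from the right, producing a decreasing chain of norms whose first and last entries both equal $\||\psi\rangle\|$ because $P|\psi\rangle=|\psi\rangle$. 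Forcing every inequality in this chain to be an equality and invoking the equality condition above, one concludes inductively that each projection fixes $|\psi\rangle$, hence $|\psi\rangle\in\bigcap_{\textbf{a}|\textbf{x}\in I_L}\mathrm{Im}(\sigma_{\textbf{a}|\textbf{x}})$, which closes the equivalence.

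The main obstacle is precisely this reverse implication: a product of projections is in general neither Hermitian nor normal, so one cannot read off its eigenvector structure from a spectral decomposition, and there is no reason a priori for an eigenvector of the product to be fixed by the individual factors. The norm-contraction argument circumvents this, using only that the relevant eigenvalue has modulus equal to the operator bound $1$. As a consistency check, note that the condition $\det(P-\mathds{1})=0$ turns out to be independent of the order in which the factors are multiplied, since the equivalent condition --- existence of a common fixed vector --- manifestly does not depend on that order.
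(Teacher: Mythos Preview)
Your argument is correct and follows essentially the same route as the paper: both directions hinge on the contraction property of orthogonal projections, and the reverse implication is obtained by sandwiching a chain of norm inequalities between two copies of $\||\psi\rangle\|$ and using that equality in $\|R|\phi\rangle\|\le\||\phi\rangle\|$ forces $R|\phi\rangle=|\phi\rangle$. The only cosmetic difference is that the paper peels off the outermost projection first (via the decomposition $|\phi\rangle=|\psi\rangle+|\psi\rangle_\perp$) while you peel from the innermost one, which is immaterial.
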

\begin{proof} Fix given local deterministic box $L$ and enumerate elements of the set $\left\{\sigma_{\textbf{a}|\textbf{x}}: \textbf{a}|\textbf{x}\in I_L\right\}$ as $\left\{\sigma_i:i=1,\ldots ,|I_L|\right\}$. Accordingly to that enumeration let us put $R_i$ instead $R_{\textbf{a}|\textbf{x}}$.

Assume that an LHS part (related to $L$) can be subtracted from a given assemblage. Then there exists normalized vector $|\psi\rangle\in \bigcap_{i}^{|I_L|}\mathrm{Im}(\sigma_i)$. From this we have $R_i|\psi\rangle=|\psi\rangle$ for any $i$ and as a consequence $R_{1}R_{2}\ldots R_{|I_L|} |\psi\rangle = |\psi\rangle$, so $\mathrm{det}\left(\prod_i^{|I_L|}R_{i} - \mathds{1}\right)= 0$ since $|\psi\rangle\neq 0$.

For the opposite implication assume that $\mathrm{det}\left(\prod_{i}^{|I_L|}R_i - \mathds{1}\right)=0$. If so then there exists a normalized vector $|\psi \rangle$ such that $R_{1}R_{2}...R_{|I_L|}|\psi\rangle = |\psi\rangle$. Define $|\phi\rangle =R_{2}...R_{|I_L|}|\psi\rangle$.
Due to that we have $|||\phi\rangle|| \leq |||\psi\rangle||$ and $R_{1} |\phi\rangle = |\psi\rangle$. This implies the following decomposition $|\phi\rangle= |\psi\rangle + |\psi\rangle_{\perp}$ with $|\psi\rangle \perp |\psi\rangle_{\perp}$ and opposite inequality $|||\phi\rangle|| \geq |||\psi\rangle||$. Finally, $|\phi\rangle = |\psi\rangle$ and $R_{1}|\psi\rangle = |\psi\rangle$. Iteration of this procedure provides that $|\psi\rangle$ is an eigenvector (corresponding to eigenvalue $1$) of $R_i$ for each $i=1, \ldots, |I_L|$.
\end{proof}

This leads to the final theorem.

\begin{theorem}\label{char}
An assemblage $\Sigma$ is on the edge of the set of no-signaling assemblages if and only if $\mathrm{det}\left(\prod_{L\in \mathcal{L}}\left(\prod_{\textbf{a}|\textbf{x}\in I_L}R_{\textbf{a}|\textbf{x}} - \mathds{1}\right)\right)\neq 0$, where $\mathcal{L}$ denotes a set of all local deterministic boxes.
\end{theorem}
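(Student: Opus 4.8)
The plan is to reduce the statement directly to Lemma \ref{protocol} by tracking the logical structure of the edge condition and then invoking the multiplicativity of the determinant. First I would recall from the discussion following Definition \ref{def_edge} that an assemblage $\Sigma$ fails to be on the edge precisely when it is possible to subtract an LHS part, and that by Lemma \ref{protocol} this happens if and only if there exists at least one local deterministic box $L\in\mathcal{L}$ for which $\det\left(\prod_{\textbf{a}|\textbf{x}\in I_L} R_{\textbf{a}|\textbf{x}} - \mathds{1}\right)=0$. Negating this existential statement, $\Sigma$ is on the edge if and only if for every $L\in\mathcal{L}$ the corresponding determinant $\det\left(\prod_{\textbf{a}|\textbf{x}\in I_L} R_{\textbf{a}|\textbf{x}} - \mathds{1}\right)$ is nonzero.

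The remaining step is to package this family of conditions into the single determinant appearing in the theorem. Writing $M_L=\prod_{\textbf{a}|\textbf{x}\in I_L} R_{\textbf{a}|\textbf{x}} - \mathds{1}$ for each box, the quantity in the statement is $\det\left(\prod_{L\in\mathcal{L}} M_L\right)$. Since the determinant is multiplicative, $\det\left(\prod_{L\in\mathcal{L}} M_L\right)=\prod_{L\in\mathcal{L}}\det(M_L)$, and a finite product of scalars is nonzero if and only if each factor is nonzero. Hence $\det\left(\prod_{L\in\mathcal{L}} M_L\right)\neq 0$ is equivalent to $\det(M_L)\neq 0$ for all $L$, which is exactly the edge condition derived in the previous paragraph.

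One point I would be careful about is that both the outer product $\prod_{L\in\mathcal{L}} M_L$ and each inner product $\prod_{\textbf{a}|\textbf{x}\in I_L} R_{\textbf{a}|\textbf{x}}$ are products of non-commuting matrices, so their values depend on the chosen ordering of factors. This is harmless here: multiplicativity of the determinant holds for any ordering, so the vanishing or non-vanishing of $\det\left(\prod_{L\in\mathcal{L}} M_L\right)$ does not depend on how the outer product is arranged. Moreover, the proof of Lemma \ref{protocol} shows that $\det(M_L)=0$ is equivalent to the order-independent condition that $\bigcap_{\textbf{a}|\textbf{x}\in I_L}\mathrm{Im}(\sigma_{\textbf{a}|\textbf{x}})$ contains a nonzero vector, so the inner ordering is likewise immaterial for the truth value of the statement.

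I do not anticipate a serious obstacle; the only thing that genuinely requires attention is the correct handling of the quantifiers, namely that the edge property is the universal statement ``for all $L$'' obtained by negating the existential characterization of subtractability, together with the observation that multiplicativity collapses this universal statement over the finite set $\mathcal{L}$ into a single non-vanishing determinant.
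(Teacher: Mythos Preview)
Your proposal is correct and matches the paper's approach: the paper states Theorem \ref{char} immediately after Lemma \ref{protocol} with the phrase ``This leads to the final theorem'' and gives no separate proof, treating it as a direct consequence via exactly the negation-plus-multiplicativity argument you wrote out. Your additional remarks on the irrelevance of the ordering of the noncommuting factors are a helpful clarification that the paper leaves implicit.
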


There is yet another simple way to check that a given assemblage is not on the edge.

\begin{lem}\label{rank} 
Let $\Sigma$ be a no-signaling assemblage of operators $\sigma_{\textbf{a}|\textbf{x}}$ acting on $d$-dimensional space. Then it is not on the edge if there exists a local deterministic box $L$ such that $\sum_{\textbf{a}|\textbf{x}\in I_L}\mathrm{rank}(\sigma_{\textbf{a}|\textbf{x}})>(|I_L|-1)d$.
\end{lem}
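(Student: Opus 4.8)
The plan is to show that the rank condition forces a common vector in the intersection of the images, which by the earlier discussion (Lemma~\ref{lemma1}) allows subtraction of an LHS part. Concretely, fix the local deterministic box $L$ satisfying the hypothesis and consider the operators $\sigma_{\textbf{a}|\textbf{x}}$ for $\textbf{a}|\textbf{x}\in I_L$. Recall that we need a nonzero vector $|\psi\rangle$ lying simultaneously in every image $\mathrm{Im}(\sigma_{\textbf{a}|\textbf{x}})$; equivalently, $|\psi\rangle$ must be orthogonal to every kernel $\ker(\sigma_{\textbf{a}|\textbf{x}})$. Since for a positive operator the kernel is the orthogonal complement of the image, we have $\dim\ker(\sigma_{\textbf{a}|\textbf{x}})=d-\mathrm{rank}(\sigma_{\textbf{a}|\textbf{x}})$.

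First I would reduce the problem to a dimension count on the sum of the kernels. The intersection $\bigcap_{\textbf{a}|\textbf{x}\in I_L}\mathrm{Im}(\sigma_{\textbf{a}|\textbf{x}})$ is nontrivial precisely when the sum of the orthogonal complements, namely $\sum_{\textbf{a}|\textbf{x}\in I_L}\ker(\sigma_{\textbf{a}|\textbf{x}})$, fails to span the whole $d$-dimensional space. By subadditivity of dimension for a sum of subspaces,
\begin{equation}
\dim\Bigl(\sum_{\textbf{a}|\textbf{x}\in I_L}\ker(\sigma_{\textbf{a}|\textbf{x}})\Bigr)\leq \sum_{\textbf{a}|\textbf{x}\in I_L}\bigl(d-\mathrm{rank}(\sigma_{\textbf{a}|\textbf{x}})\bigr)=|I_L|\,d-\sum_{\textbf{a}|\textbf{x}\in I_L}\mathrm{rank}(\sigma_{\textbf{a}|\textbf{x}}).
\end{equation}
The hypothesis $\sum_{\textbf{a}|\textbf{x}\in I_L}\mathrm{rank}(\sigma_{\textbf{a}|\textbf{x}})>(|I_L|-1)d$ makes the right-hand side strictly less than $|I_L|\,d-(|I_L|-1)d=d$. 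Hence the span of the kernels has dimension at most $d-1$, so its orthogonal complement, which equals the intersection of the images, has dimension at least $1$ and therefore contains a normalized vector $|\psi\rangle$.

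Finally I would invoke the characterization already established: the existence of such a common $|\psi\rangle\in\bigcap_{\textbf{a}|\textbf{x}\in I_L}\mathrm{Im}(\sigma_{\textbf{a}|\textbf{x}})$ is exactly the condition (via Lemma~\ref{lemma1} and the remark following it) under which an LHS part associated with $L$ can be subtracted, so $\Sigma$ is not on the edge. I expect no serious obstacle here; the only point requiring mild care is the clean passage between the intersection of images and the orthogonal complement of the sum of kernels, which relies on the fact that for positive semidefinite operators $\mathrm{Im}(\sigma)^{\perp}=\ker(\sigma)$ and on the elementary identity $\bigl(\sum_k V_k\bigr)^{\perp}=\bigcap_k V_k^{\perp}$. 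Note that the condition is only sufficient, not necessary, because the dimension inequality I use is an upper bound that need not be tight, so the converse direction is not claimed.
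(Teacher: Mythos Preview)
Your argument is correct. Both your proof and the paper's reach the same conclusion---a nontrivial common vector in $\bigcap_{\textbf{a}|\textbf{x}\in I_L}\mathrm{Im}(\sigma_{\textbf{a}|\textbf{x}})$---via elementary dimension counting, but they organize the count differently. The paper works directly with the images and proceeds iteratively: it sets $V_1=\mathrm{Im}(\sigma_1)\cap\mathrm{Im}(\sigma_2)$, uses $\dim(V\cap W)\geq\dim V+\dim W-d$ to bound $\dim V_1$ from below, then intersects with $\mathrm{Im}(\sigma_3)$, and so on, tracking the running lower bound until the full intersection is shown to be nonzero. Your route is the dual one: you pass to kernels, bound $\dim\bigl(\sum\ker(\sigma_{\textbf{a}|\textbf{x}})\bigr)$ from above by subadditivity in a single step, and then take orthogonal complements via $\bigl(\sum_k V_k\bigr)^{\perp}=\bigcap_k V_k^{\perp}$. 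The two computations are formally equivalent (the iterative inequality on intersections is the complement of subadditivity on sums), so neither buys extra generality; your version is a bit more compact because it avoids the induction, while the paper's version makes the step-by-step shrinking of the intersection explicit, which matches the style of the surrounding arguments.
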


\begin{proof}Enumerate all elements by $\left\{\sigma_i:i=1,\ldots,|I_L|\right\}$. Because $\sum_i^{|I_L|}\mathrm{rank}(\sigma_i)>(|I_L|-1)d$ we have $\mathrm{rank}(\sigma_1)+\mathrm{rank}(\sigma_2)>d$ and therefore $V_1=\mathrm{Im}(\sigma_1)\cap \mathrm{Im}(\sigma_2)\neq \left\{0\right\}$ with $\mathrm{dim}\ V_1\geq \mathrm{rank}(\sigma_1)+\mathrm{rank}( \sigma_2)-d$. For similar reason we also have $\mathrm{dim}\ V_1+\mathrm{rank}(\sigma_3)>d$ (as $\mathrm{rank}(\sigma_1)+\mathrm{rank}(\sigma_2)+\mathrm{rank}(\sigma_3)>2d$) and $V_2=V_1\cap \mathrm{Im}(\sigma_3)\neq \left\{0\right\}$. Iterating this argument, we derive at $V_{|I_L|-1}=\bigcap_i^{|I_L|}\mathrm{Im}(\sigma_i)\neq \left\{0\right\}$, so there exists a non-zero vector which belongs to the images of all operators $\sigma_i$.
\end{proof}

As a simple consequence of the previous lemma, we obtain the following corollary.

\begin{corollary}\label{corollary}Let $\Sigma$ be a no-signaling assemblage of operators $\sigma_{\textbf{a}|\textbf{x}}$ acting on $d$-dimensional space. If $\Sigma$ is on the edge, then $\sum_{\textbf{a}|\textbf{x}}\mathrm{rank}(\sigma_{\textbf{a}|\textbf{x}})\leq \left(\prod_{i}^n \mathcal{X}_i-1\right)\left(\prod_{i}^n \mathcal{A}_i\right)d$.
\end{corollary}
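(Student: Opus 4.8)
The plan is to derive Corollary~\ref{corollary} directly from Lemma~\ref{rank} by taking its contrapositive and then summing over \emph{all} index positions rather than just those belonging to a single deterministic box. First I would note that Lemma~\ref{rank} states a sufficient condition for $\Sigma$ \emph{not} to be on the edge, namely the existence of some local deterministic box $L$ with $\sum_{\textbf{a}|\textbf{x}\in I_L}\mathrm{rank}(\sigma_{\textbf{a}|\textbf{x}})>(|I_L|-1)d$. Taking the contrapositive, if $\Sigma$ \emph{is} on the edge, then for \emph{every} local deterministic box $L$ we must have the reversed (non-strict) inequality
\begin{equation}\label{perbox}
\sum_{\textbf{a}|\textbf{x}\in I_L}\mathrm{rank}(\sigma_{\textbf{a}|\textbf{x}})\leq (|I_L|-1)d.
\end{equation}
Since $|I_L|=\prod_i^n\mathcal{X}_i$ is the same for every deterministic box (as remarked in Section~\ref{sII}), the right-hand side of \eqref{perbox} is a fixed constant independent of $L$.

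Next I would like to sum inequality \eqref{perbox} over the collection $\mathcal{L}$ of all local deterministic boxes and turn the left-hand side into the total rank sum $\sum_{\textbf{a}|\textbf{x}}\mathrm{rank}(\sigma_{\textbf{a}|\textbf{x}})$ appearing in the corollary. The key combinatorial fact I would establish is how many boxes $L$ contain a given fixed position $\textbf{a}|\textbf{x}$ in their support $I_L$, i.e. satisfy $p(\textbf{a}|\textbf{x})=1$. A deterministic box is specified by choosing, for each setting tuple $x_1\ldots x_n$, a single outcome tuple; fixing that one particular position $\textbf{a}|\textbf{x}$ pins down the outcome only for the setting tuple $x_1\ldots x_n$ and leaves all the other $\prod_i^n\mathcal{X}_i-1$ setting tuples free, each with $\prod_i^n\mathcal{A}_i$ choices. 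Hence every fixed position lies in exactly $N:=(\prod_i^n\mathcal{A}_i)^{\,\prod_i^n\mathcal{X}_i-1}$ of the deterministic boxes, the same count for every position. Summing \eqref{perbox} over all $L\in\mathcal{L}$ therefore gives
\begin{equation}\label{summed}
N\sum_{\textbf{a}|\textbf{x}}\mathrm{rank}(\sigma_{\textbf{a}|\textbf{x}})\leq |\mathcal{L}|\,\bigl(\textstyle\prod_i^n\mathcal{X}_i-1\bigr)d,
\end{equation}
where the left side uses the double-counting identity $\sum_{L}\sum_{\textbf{a}|\textbf{x}\in I_L}\mathrm{rank}(\sigma_{\textbf{a}|\textbf{x}})=N\sum_{\textbf{a}|\textbf{x}}\mathrm{rank}(\sigma_{\textbf{a}|\textbf{x}})$.

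Finally I would divide \eqref{summed} by $N$ and simplify the ratio $|\mathcal{L}|/N$. The total number of deterministic boxes is $|\mathcal{L}|=(\prod_i^n\mathcal{A}_i)^{\,\prod_i^n\mathcal{X}_i}$, so $|\mathcal{L}|/N=\prod_i^n\mathcal{A}_i$, which yields exactly the claimed bound $\sum_{\textbf{a}|\textbf{x}}\mathrm{rank}(\sigma_{\textbf{a}|\textbf{x}})\leq(\prod_i^n\mathcal{X}_i-1)(\prod_i^n\mathcal{A}_i)d$. I expect the main obstacle to be the bookkeeping in the double-counting step: one must verify carefully that each position is covered by the \emph{same} number $N$ of boxes and that the per-box support sizes $|I_L|$ are all equal, so that summing the per-box inequalities is legitimate and the constant factors cancel cleanly. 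Everything else is a routine substitution of the cardinalities $|I_L|$, $N$, and $|\mathcal{L}|$ already fixed by the combinatorics of deterministic boxes.
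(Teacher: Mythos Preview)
Your averaging idea is sound, but the combinatorics contains a genuine error that breaks the logical chain. You describe a deterministic box as ``choosing, for each setting tuple $x_1\ldots x_n$, a single outcome tuple'', which is the parametrisation of an \emph{arbitrary} deterministic conditional distribution, not of a \emph{local} one. Lemma~\ref{rank}, and hence the per-box bound \eqref{perbox}, is stated only for local deterministic boxes $L$ (the point being that only then does the subtracted piece $\epsilon\,L\otimes|\psi\rangle\langle\psi|$ lie in $S_1$); for a non-local deterministic $L$ there is no reason the inequality should hold. Concretely, in the $n=2$ binary case the paper counts $16$ local deterministic boxes, whereas your formula $|\mathcal{L}|=(\prod_i\mathcal{A}_i)^{\prod_i\mathcal{X}_i}$ gives $4^4=256$, so you would be summing \eqref{perbox} over boxes to which it does not apply. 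The fix is immediate: a local deterministic box is specified by independent functions $f_i:\{0,\dots,\mathcal{X}_i-1\}\to\{0,\dots,\mathcal{A}_i-1\}$, so $|\mathcal{L}|=\prod_i\mathcal{A}_i^{\mathcal{X}_i}$ and the number containing a fixed position $\mathbf{a}|\mathbf{x}$ is $N=\prod_i\mathcal{A}_i^{\mathcal{X}_i-1}$ (pinning $f_i(x_i)=a_i$ for each $i$). Fortunately the ratio $|\mathcal{L}|/N=\prod_i\mathcal{A}_i$ is unchanged, and with these corrected counts your averaging argument goes through verbatim.

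For comparison, the paper does not average: it simply \emph{partitions} the full index set $\Lambda$ into $\prod_i\mathcal{A}_i$ disjoint pieces, each equal to $I_{L}$ for a constant-output local box (party $i$ outputs a fixed $a_i^0$ regardless of $x_i$), applies the contrapositive of Lemma~\ref{rank} to each piece, and sums. This avoids any double counting and is slightly more direct; your (corrected) averaging argument is a mild generalisation that would work even if no disjoint covering by supports existed, at the cost of needing the uniform-coverage count.
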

\begin{proof}Consider the set $\Lambda$ of all positions $\textbf{a}|\textbf{x}$ in the assemblage $\Sigma$. Note that $|\Lambda|=\prod_{i}^n \mathcal{X}_i\prod_{i}^n \mathcal{A}_i$. Now observe that one can define $\prod_{i}^n \mathcal{A}_i$ disjoint subsets $\Lambda_i$ of $\Lambda$ such that $\Lambda=\cup_i \Lambda_i$ and $\Lambda_i=I_{L_i}$ for some local deterministic box $L_i$. By the Lemma \ref{rank} we see that if $\Sigma$ is on the edge, sum of ranks of elements $\sigma_{\textbf{a}|\textbf{x}}$ related to a given $\Lambda_i$ cannot exceed $\left(|I_{L_i}|-1\right)d=\left(\prod_{i}^n \mathcal{X}_i-1\right)d$.
\end{proof}

Let us consider a particular steering scenario with $n=2$ and $a,b,x,y\in \left\{0,1\right\}$, it is convenient to see any such no-signaling assemblage $\Sigma=\left\{\sigma_{ab|xy}\right\}$ as the following box 
\begin{equation}\label{box}
\Sigma=\begin{pmatrix}
\begin{array}{cc|cc}
 \sigma_{00|00} &  \sigma_{01|00} & \sigma_{00|01} &  \sigma_{01|01} \\  
 \sigma_{10|00} & \sigma_{11|00}& \sigma_{10|01}  & \sigma_{11|01} \\ \hline
 \sigma_{00|10} & \sigma_{01|10} & \sigma_{00|11}&  \sigma_{01|11}  \\
   \sigma_{10|10} & \sigma_{11|10} & \sigma_{10|11} & \sigma_{11|11}
\end{array}
\end{pmatrix}.
\end{equation}No-signaling conditions (\ref{def12}) have now simple interpretation. Namely, in each row sum of two operators on the right-hand side must be equal to the sum of two operators on the left-hand side (i.e. $\sum_{b}\sigma_{ab|xy}=\sum_b\sigma_{ab|xy'}$), while similarly in each column the sum of two operators in the upper part must be equal to the sum of two operators in the lower part (i.e. $\sum_{a}\sigma_{ab|xy}=\sum_a\sigma_{ab|x'y}$). Normalization condition (\ref{def11}) is encoded into the fact that $\mathrm{Tr}(\sum_{a,b}\sigma_{ab|xy})=1$ for all pairs $x,y$.
Note that in considering setting (i.e. $a,b,x,y\in \left\{0,1\right\}$) there are $16$ deterministic boxes $L_{\alpha\beta\gamma\delta}=\left\{p_{\alpha\beta\gamma\delta}(ab|xy)\right\}$, defined by conditional probabilities ($\oplus$ stands here for addition modulo $2$ and $\alpha,\beta, \gamma, \delta\in\left\{0,1\right\}$).
\begin{equation}
 p_{\alpha\beta\gamma\delta}(ab|xy)=
\begin{cases}
1\ \ \ \mathrm{for}\ a=\alpha x\oplus \beta, b=\gamma y\oplus \delta \\
0 \ \ \  \mathrm{otherwise}.
\end{cases}
\end{equation}Using graphical presentation like in (\ref{box}), any $L_{\alpha\beta\gamma\delta}$ can be seen as a box with four positions occupied by $1$ forming a rectangle (and other positions occupied by $0$). Therefore, one can subtract an LHS part from $\Sigma$ if and only if there exist a rectangle (one of $16$ possible) with vertexes given by positions occupied by operators with common vector in their images, as it was stated in Lemma \ref{lemma1} (see for example a graphical presentation of possible rectangle given by color red in (\ref{rectangle})).
\begin{equation}\label{rectangle}
\Sigma=\begin{pmatrix}
\begin{array}{cc|cc}
 \color{red} \sigma_{00|00}\color{black} & \ldots  & \color{red}\sigma_{00|01}\color{black} &  \ldots \\  
  \ldots & \ldots& \ldots  & \ldots\\ \hline
 \color{red} \sigma_{00|10}\color{black} & \ldots &  \color{red}\sigma_{00|11}\color{black} &  \ldots \\
 \ldots  & \ldots & \ldots & \ldots
\end{array}
\end{pmatrix}.
\end{equation}In particular if $d=2$ according to Remark \ref{rem} this can be done if and only if among operators from rectangle there is no $0$ nor two operators proportional to different rank one states - we will use graphical interpretation (\ref{box}) extensively.

\section{Construction of witnesses}\label{sIV}

Consider a no-signaling assemblage $\Sigma=\left\{\sigma_{\textbf{a}|\textbf{x}}\right\}$ with $\sigma_{\textbf{a}|\textbf{x}}\in M_d(\mathbb{C})_{+}\subset M_d(\mathbb{C})_{sa}$ where $M_d(\mathbb{C})_{sa}$ stands for a space of hermitian operators. Then $\Sigma$ can be seen as an element in a real Hilbert space $\bigoplus_{\textbf{a}|\textbf{x}}M_d(\mathbb{C})_{sa}$. Using theorem of Riesz and Hahn-Banach type of reasoning \cite{KR} for sets $S_1\subset S_2$, one can introduce a notion of a witness for no-signaling assemblages beyond LHS description. We say that $W\in \bigoplus_{\textbf{a}|\textbf{x}}M_d(\mathbb{C})_{sa}$ is a witness if $\mathrm{Tr}(W\Sigma_1)\geq 0$ for all $\Sigma_1\in S_1$ and there exists $\Sigma_2\in S_2\setminus S_1$ such that $\mathrm{Tr}(W\Sigma_2)< 0$. Note that $\Sigma\in S_2\setminus S_1$ if and only if there exist a witness $W$ such that $\mathrm{Tr}(W\Sigma)< 0$. In fact any assemblage belongs to the nontrivial subspace $V\subset \bigoplus_{\textbf{a}|\textbf{x}}M_d(\mathbb{C})_{sa}$ defined by no-signaling conditions, so one may consider only witnesses from $V$ - however we will not restrict our attention to $V$, as the following construction of witnesses out of edge assemblages may lead to witnesses beyond $V$ (discussion on optimality \cite{BCH22} of witnesses will be considered elsewhere).

Let us introduce a set $\mathcal{Z}$ which consists of all $Z\in \bigoplus_{\textbf{a}|\textbf{x}}M_d(\mathbb{C})_{sa}$ such that $\mathrm{Tr}(Z\Sigma)\geq 0$ for any no-signaling assemblage $\Sigma$. Any witness can be expressed in a canonical form related to particular assemblage on the edge (compare with Theorem 4 in \cite{BCH}).
\begin{proposition}
Let $W$ be a witness, then $W=Z-\epsilon \mathds{1}$ where $Z\in \mathcal{Z}$ such that $\mathrm{Tr}(Z\Sigma)=0$ for some edge assemblage $\Sigma$ and $\epsilon>0$.
\end{proposition}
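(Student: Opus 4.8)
The plan is to establish this canonical decomposition by analyzing what the witness conditions force geometrically. Recall that a witness $W$ must satisfy $\mathrm{Tr}(W\Sigma_1)\geq 0$ for all $\Sigma_1\in S_1$, with strict violation on some $\Sigma_2\in S_2\setminus S_1$. The key idea is that the set $S_1$ of LHS assemblages is generated by the extremal products $p(\textbf{a}|\textbf{x})|\psi\rangle\langle\psi|$, where $L=\{p(\textbf{a}|\textbf{x})\}$ is a local deterministic box and $|\psi\rangle\langle\psi|$ is a pure state. First I would define $\epsilon$ as the largest real number such that $W+\epsilon\mathds{1}$ still evaluates nonnegatively on \emph{every} no-signaling assemblage in $S_2$; concretely, set
\begin{equation}
\epsilon=-\min_{\Sigma\in S_2}\mathrm{Tr}(W\Sigma),
\end{equation}
which is well defined and finite because $S_2$ is compact and the trace functional is continuous. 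Then $Z:=W+\epsilon\mathds{1}$ satisfies $\mathrm{Tr}(Z\Sigma)=\mathrm{Tr}(W\Sigma)+\epsilon\geq 0$ for all $\Sigma\in S_2$, so $Z\in\mathcal{Z}$, and the minimizing assemblage $\Sigma^\ast$ attains $\mathrm{Tr}(Z\Sigma^\ast)=0$. This gives the form $W=Z-\epsilon\mathds{1}$ with $\mathrm{Tr}(Z\Sigma^\ast)=0$ almost immediately.

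The substantive work is to prove the two claims that $\epsilon>0$ and that the minimizer $\Sigma^\ast$ can be taken on the edge. For positivity of $\epsilon$, I would argue by contradiction: if $\epsilon=0$ then $\mathrm{Tr}(W\Sigma)\geq 0$ for all $\Sigma\in S_2$, in particular on the violating assemblage $\Sigma_2\in S_2\setminus S_1$, contradicting $\mathrm{Tr}(W\Sigma_2)<0$. Since $S_2$ is compact and contains $\Sigma_2$, the minimum is genuinely negative, hence $\epsilon>0$. One must also check $\epsilon$ cannot be $+\infty$, which follows from compactness of $S_2$.

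The harder part is showing that the minimizer $\Sigma^\ast$ lies on the edge, and here I would invoke the characterization of the edge via the steerable weight discussed in the excerpt. The plan is to suppose $\Sigma^\ast$ is \emph{not} on the edge; then by Definition \ref{def_edge} and the subsequent discussion one can subtract an LHS part, writing $\Sigma^\ast=\Sigma^{\mathrm{edge}}+\delta\,\Sigma_{\mathrm{LHS}}$ for some $\delta>0$, where $\Sigma^{\mathrm{edge}}\in S_2$ and $\Sigma_{\mathrm{LHS}}$ is (proportional to) an LHS assemblage of the form $p(\textbf{a}|\textbf{x})|\psi\rangle\langle\psi|$. Since $Z\in\mathcal{Z}$ evaluates nonnegatively on every no-signaling assemblage and $\mathrm{Tr}(Z\Sigma^\ast)=0$ is minimal, linearity of the trace forces $\mathrm{Tr}(Z\Sigma^{\mathrm{edge}})=0$ as well (both summands must vanish because each is nonnegative and they sum to zero). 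Thus replacing $\Sigma^\ast$ by the edge assemblage $\Sigma^{\mathrm{edge}}$ preserves the property $\mathrm{Tr}(Z\Sigma^{\mathrm{edge}})=0$, so without loss of generality the zero of $Z$ is attained on the edge. The main obstacle I anticipate is verifying rigorously that this subtraction keeps us inside $S_2$ with a genuine edge remainder rather than running into boundary degeneracies; this should follow by iterating the subtraction (peeling off LHS parts until none remain) and using compactness to guarantee the process terminates at an edge assemblage, but the termination argument and the bookkeeping of which extremal LHS directions are subtracted is the delicate step.
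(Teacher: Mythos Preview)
The paper does not actually supply a proof of this proposition; it merely states it with a pointer to the analogous Theorem~4 in \cite{BCH}, and the paragraph that follows is devoted to the \emph{converse} construction (building a witness from a given edge assemblage via $Z_{\textbf{a}|\textbf{x}}=\mathds{1}_d-R_{\textbf{a}|\textbf{x}}$), not to the proposition itself. So there is nothing in the paper to compare against directly, and your outline is the standard argument one would give.

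Two corrections. First, a normalization slip: $\mathrm{Tr}(\mathds{1}\Sigma)$ equals the constant $c=\prod_i\mathcal{X}_i$ (since $\sum_{\textbf{a}}\mathrm{Tr}(\sigma_{\textbf{a}|\textbf{x}})=1$ for each $\textbf{x}$), not $1$, so the right choice is $\epsilon=-c^{-1}\min_{\Sigma\in S_2}\mathrm{Tr}(W\Sigma)$; this is cosmetic. Second, and more substantively, the iterative ``peeling off'' you flag as delicate is unnecessary, and you already hold the ingredient that removes it. In the convex decomposition $\Sigma^\ast=\epsilon'\Sigma_1+(1-\epsilon')\Sigma_2$ with $\Sigma_1\in S_1$ and $\epsilon'>0$, you correctly observe that nonnegativity of $Z$ on $S_2$ together with $\mathrm{Tr}(Z\Sigma^\ast)=0$ forces both $\mathrm{Tr}(Z\Sigma_1)=0$ and $\mathrm{Tr}(Z\Sigma_2)=0$. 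But $\mathrm{Tr}(Z\Sigma_1)=0$ means $\mathrm{Tr}(W\Sigma_1)=-\epsilon c<0$, contradicting the witness hypothesis $\mathrm{Tr}(W\Sigma_1)\geq 0$ for $\Sigma_1\in S_1$. Hence no such decomposition exists and the minimizer $\Sigma^\ast$ is \emph{already} on the edge---no iteration, no termination argument, and no compactness beyond what you used to define $\epsilon$.
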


Note that $\Sigma$ is on the edge if and only if for any local deterministic box $L$, there is no pure state $|\psi\rangle\in \mathbb{C}^d$ such that $|\psi\rangle \in \bigcap_{\textbf{a}|\textbf{x}\in I_L} \mathrm{Im}(R_{\textbf{a}|\textbf{x}})$. Let $\Sigma$ be on the edge. Define $Z\in \bigoplus_{\textbf{a}|\textbf{x}}M_d(\mathbb{C})_{sa}$ by $Z_{\textbf{a}|\textbf{x}}=\mathds{1}_d-R_{\textbf{a}|\textbf{x}}$. Obviously $\mathrm{Tr}(Z\Sigma)=0$ and there exists $\epsilon$ such that $0<\epsilon\leq \inf_{LHS} \mathrm{Tr}(Z\Sigma_{LHS})$. Indeed, for any LHS assemblage of the form $L\otimes |\psi\rangle\langle \psi|$, there exists $(\textbf{a}|\textbf{x})\in I_L$ such that $\mathrm{Tr}(|\psi\rangle\langle \psi|Z_{\textbf{a}|\textbf{x}})\neq 0$, because if not, then $|\psi\rangle \in \bigcap_{\textbf{a}|\textbf{x}\in I_L} \mathrm{Im}(R_{\textbf{a}|\textbf{x}})$ which is in contradiction with the fact that $\Sigma$ in on the edge.
If so, we may define a witness detecting $\Sigma$ as $W=Z-\frac{\epsilon}{\mathrm{Tr}(\Sigma)}\mathds{1}_{\oplus}$ where $\mathds{1}_{\oplus}=\bigoplus_{\textbf{a}|\textbf{x}}\mathds{1}_d$ and $\mathrm{Tr}(\Sigma)$ does not depend on particular $\Sigma$. Note that by putting maximal $\epsilon$ we obtain $W$ such that there exists a $\Sigma_{LHS}$ for which $\mathrm{Tr}(W\Sigma_{LHS})=0$.

\begin{example}\label{ex}
Consider the following rank two state $\rho=1/2|\phi_1\rangle\langle \phi_1| + 1/2 |\phi_2\rangle \langle \phi_2|$ where $|\phi_1 \rangle =|0\rangle \otimes \frac{1}{\sqrt{2}}\left (|00\rangle + |11\rangle \right)$, $|\phi_2\rangle =|1\rangle \otimes \frac{1}{\sqrt{2}}\left (|00\rangle - |11\rangle \right) $ and assemblage $\Sigma=\left\{\sigma_{ab|xy}\right\}$ given by $\sigma_{ab|xy}=\mathrm{Tr}_{AB}(P_{a|x}\otimes Q_{b|y}\otimes \mathds{1}\rho)$ with projective measurements $P_{0|0}=Q_{0|0}=|0\rangle \langle 0|$ and  $P_{0|1}=Q_{0|1}=|+\rangle \langle +|$
\begin{equation}
\Sigma=\frac{1}{8}\begin{pmatrix}
\begin{array}{cc|cc}
   2|0\rangle \langle 0| &  2|1\rangle \langle 1| &  2|+\rangle \langle +| &  2|-\rangle \langle -|\\
    2|0\rangle \langle 0| & 2|1\rangle \langle 1|  &  2|-\rangle \langle -| &  2|+\rangle \langle +| \\\hline
    2|0\rangle \langle 0|  &2|1\rangle \langle 1| &\mathds{1} & \mathds{1} \\
		2|0 \rangle \langle 0|  &2|1\rangle \langle 1| &\mathds{1}  & \mathds{1} 
\end{array}
\end{pmatrix}.
\end{equation}It can be easily seen that this assemblage is on the edge (compare with Remark \ref{rem}). Moreover, as it has two rows which are equal, $\Sigma=\frac{1}{2}(\Sigma_1+\Sigma_2)$ where $\Sigma_i$ for $i=1$ ($i=2$) is constructed out of $\Sigma$ by exchanging fourth row (third row) with zeros and multiplying third row (fourth row) by a factor of two. As $\Sigma_1\neq\Sigma_2$ given assemblage is not extremal. Note that for $0\leq p \leq 1$ each assemblage $\Sigma_p=p\Sigma_1+(1-p)\Sigma_2$ is still on the edge - this provides an example of a flat region inside the edge.
Define $Z$ as
\begin{equation}
Z =\begin{pmatrix}
\begin{array}{cc|cc}
    |1\rangle \langle 1| &   |0\rangle \langle 0|  &  |-\rangle \langle -| &  |+\rangle \langle +|\\
  |1\rangle \langle 1| &   |0\rangle \langle 0|  &  |+\rangle \langle +| &   |-\rangle \langle -|\\\hline
    |1\rangle \langle 1| &   |0\rangle \langle 0|  &  0 &  0 \\
		 |1\rangle \langle 1| &  |0\rangle \langle 0|  &  0 &  0
\end{array}
\end{pmatrix}.
\end{equation}Looking at symmetries of $Z$ we can see that computation of $\epsilon$ is equivalent to the following minimization problem
$\min_{\rho} \mathrm{Tr}((2|0\rangle \langle 0|+|+\rangle \langle +|) \rho)= \frac{3-\sqrt{5}}{2}$. By fixing this $\epsilon$ we define a witness $W= Z - \frac{3-\sqrt{5}}{8}\mathds{1}_{\oplus}$ detecting $\Sigma_p$.
\end{example}

\section{Edge assemblages with quantum realization}\label{sV}

In this section we will restrict our attention to the case of assemblages $\Sigma=\left\{\sigma_{ab|xy}\right\}$ with $a,b,x,y \in \left\{0,1\right\}$ and $\sigma_{ab|xy}$ acting on $d$ dimensional space. In what follows we will use $A,B,C$ for description of subsystems (where subsystem C will be characterized).

As not all no-signaling assembles in this setting admits quantum realization it is natural to ask whether there exist edge assemblages that are not postquantum - in this section we provide a positive answer to this question. Note the difference with the related case of the edge of the polytope of bipartite no-signaling boxes with binary outcomes/settings (where $S_1$ is given by boxes of local type, and $S_2$ stands for all boxes). In that case, it can be easily shown that any box on the edge must be non-local and extremal in the polytope of all no-signaling boxes (i.e. it must be a PR-box up to relabeling) - any such extreme point does not admit quantum realization \cite{RTHHPRL}. From this, any assemblage obtained by measurements on tripartite state separable in $AB|C$ cut is never on the edge.

\begin{theorem}\label{thm_old}
For any pure tripartite state $|\psi_{ABC}\rangle\in \mathbb{C}^{2}\otimes \mathbb{C}^{2}\otimes \mathbb{C}^{d}$ entangled in a cut $AB|C$, there exists a pair of projective measurements $\left\{P_{a|0}\right\}_{a=0}^1,\left\{P_{a|1}\right\}_{a=0}^1$ on the subsystem A and  $\left\{Q_{a|0}\right\}_{a=0}^1,\left\{Q_{a|1}\right\}_{a=0}^1$ on the subsystem B respectively, such that a no-signaling assemblage $\Sigma=\left\{\sigma_{ab|xy}\right\}$ obtained by $\sigma_{ab|xy}=\mathrm{Tr}_{AB}(P_{a|x}\otimes Q_{b|y}\otimes \mathds{1}|\psi_{ABC}\rangle \langle \psi_{ABC}|)$ is on the edge of the set of no-signaling assemblages. 
\end{theorem}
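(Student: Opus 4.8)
The plan is to show that for a state entangled across the $AB|C$ cut, one can always choose projective measurements on $A$ and $B$ so that the resulting assemblage satisfies the edge characterization of Theorem~\ref{char}, or equivalently so that no LHS part can be subtracted in the sense of Lemma~\ref{protocol}. Since the subsystems $A$ and $B$ are qubits, I would lean on Remark~\ref{rem}: subtracting an LHS part is possible if and only if some local deterministic box $L$ has all four of its positions occupied by operators $\sigma_{\textbf{a}|\textbf{x}}$ that share a common vector in their images. In the $d$-dimensional characterized setting, the precise criterion is that for some rectangle (one of the $16$ boxes $L_{\alpha\beta\gamma\delta}$) the four images have nontrivial common intersection. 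So the goal reduces to: engineer measurements making every such fourfold intersection trivial.

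First I would reduce to a convenient normal form. Because the state is entangled in the $AB|C$ cut, its Schmidt decomposition across that cut has Schmidt rank at least $2$, so there exist two orthogonal product vectors on the $AB$ side appearing with nonzero weight; this is the structure exploited in Example~\ref{ex}, where the two branches $|0\rangle\otimes(\cdot)$ and $|1\rangle\otimes(\cdot)$ give conditioned states that pin down a rank-two $\rho_C$ with controllable image directions. More generally I would write the $AB$-reduced description so that the settings $x,y$ act as local bases on the two qubits. The key structural fact is that a projective qubit measurement in setting $x$ (resp.\ $y$) splits the $AB$ space into rank-one projectors, so each $\sigma_{ab|xy}$ is the conditioned state of $C$ given outcomes $a,b$; choosing the two settings on each qubit to be mutually unbiased (e.g.\ computational and Hadamard, as $\{|0\rangle,|1\rangle\}$ versus $\{|+\rangle,|-\rangle\}$ in the example) maximally ``spreads'' these conditioned images.

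The central step is then the intersection argument. For each of the $16$ rectangles I must show $\bigcap_{\textbf{a}|\textbf{x}\in I_L}\mathrm{Im}(\sigma_{\textbf{a}|\textbf{x}})=\{0\}$, i.e.\ that there is no normalized $|\psi\rangle\in\mathbb{C}^d$ lying in all four images simultaneously. A rectangle fixes either one row-setting pair or mixes both, and crucially each rectangle necessarily contains positions from \emph{both} values of at least one qubit's setting; combined with mutual unbiasedness, the four conditioned states cannot all contain a common vector unless the state fails to be entangled in the $AB|C$ cut. I would argue this by contradiction: a common $|\psi\rangle$ in all four images forces $|\psi\rangle\langle\psi|$-type constraints that, pulled back through the two complementary measurement bases, imply the $AB$-marginal factorizes, contradicting Schmidt rank $\geq 2$. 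Equivalently, I can apply Lemma~\ref{rank}/Corollary~\ref{corollary} locally to rule out rectangles where the rank budget is already violated, and handle the remaining low-rank rectangles by the explicit mutual-unbiasedness obstruction.

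I expect the main obstacle to be the uniformity of the measurement choice across all $16$ rectangles at once: a measurement pair that kills the common intersection for one rectangle could create a shared image vector for another, so the delicate part is exhibiting a single pair of settings on each qubit (a fixed basis and a complementary one, chosen relative to the state's Schmidt vectors) that simultaneously trivializes every fourfold intersection. I would resolve this by making the measurement directions depend on the Schmidt decomposition of $|\psi_{ABC}\rangle$ in the $AB|C$ cut rather than on a fixed universal basis, and then verify the finite list of $16$ cases, treating the degenerate Schmidt-coefficient situation separately where extra care about coincident image directions is needed.
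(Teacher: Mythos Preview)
Your high-level target---choose measurements so that every rectangle $I_L$ has $\bigcap_{\mathbf{a}|\mathbf{x}\in I_L}\mathrm{Im}(\sigma_{\mathbf{a}|\mathbf{x}})=\{0\}$---is correct, but the argument you sketch has real gaps, and the paper follows a different route.

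Two concrete problems with your sketch. First, the reduction step is false: Schmidt rank $\geq 2$ across $AB|C$ gives orthonormal Schmidt vectors on the $AB$ side, but these need \emph{not} be product (they can be Bell states, for instance), so you cannot in general force the branch structure of Example~\ref{ex}. Second, your contradiction (``a common $|\psi\rangle$ forces the $AB$-marginal to factorize, contradicting Schmidt rank $\geq 2$'') is neither derived nor does it yield the claimed contradiction. More seriously, your scheme does not address the case $|\psi_{ABC}\rangle=|\varphi_A\rangle\otimes|\varphi_{BC}\rangle$ (entangled across $AB|C$ but product across $A|BC$): here any $A$-measurement whose projectors are not orthogonal to $|\varphi_A\rangle$ contributes only a nonzero scalar, so $\mathrm{Im}(\sigma_{ab|xy})$ depends only on $(b,y)$, and your mutual-unbiasedness heuristic on $A$ buys nothing. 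Handling all $16$ rectangles simultaneously in this regime requires a separate argument about the four $B$-conditioned states, which you do not supply.

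The paper's proof avoids a uniform $16$-rectangle verification by splitting into two cases. If $|\psi_{ABC}\rangle$ is genuinely tripartite entangled, it invokes an external result \cite{RBRH} that produces an assemblage which is simultaneously extremal in $S_2$ and non-LHS, hence automatically on the edge---no intersection analysis needed. If instead (without loss of generality) $|\psi_{ABC}\rangle=|\varphi_A\rangle|\varphi_{BC}\rangle$ with $|\varphi_{BC}\rangle$ entangled, it deliberately uses a \emph{degenerate} choice on $A$: take $P_{a|0}=P_{a|1}$ with $P_{0|0}=|\varphi_A\rangle\langle\varphi_A|$, so outcome $a=1$ yields zero everywhere; on $B$ it picks two settings (again via \cite{RBRH}) so that the four rank-one operators $\mathrm{Tr}_B(Q_{b|y}\otimes\mathds{1}\,|\varphi_{BC}\rangle\langle\varphi_{BC}|)$ are pairwise non-proportional. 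Every rectangle then either contains a zero operator or contains two distinct rank-one operators with disjoint one-dimensional images. The case split and the use of a degenerate (repeated) setting on the product side are precisely the ingredients missing from your proposal.
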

\begin{proof} Assume that $|\psi_{ABC}\rangle$ is genuinely entangled (i.e. entangled in any bipartite cut).  By result given in \cite{RBRH}, one can adjust projective measurements in such a way that resulting assemblage $\Sigma=\left\{\sigma_{ab|xy}\right\}$ is both extremal and not LHS, thus it is on the edge of no-signaling. To conclude the proof we may without loss of generality assume that $|\psi_{ABC}\rangle=|\varphi_A\rangle|\varphi_{BC}\rangle$ with $|\varphi_{BC}\rangle\in \mathbb{C}^{2}\otimes \mathbb{C}^{d}$ being entangled. Put $P_{a|0}=P_{a|1}$ with $P_{0|0}=|\varphi_A\rangle \langle \varphi_A|$ and define measurements on subsystem B $\left\{Q_{b|0}\right\}_{b=0}^1,\left\{Q_{b|1}\right\}_{b=0}^1$  in such a way that any two operators of the form $\mathrm{Tr}_{B}(Q_{b|y}\otimes \mathds{1}|\varphi_{BC}\rangle \langle \varphi_{BC}|)$ are not proportional (this is always possible - see for example \cite{RBRH}). It is then obvious that one cannot subtract an LHS part from this assemblage (there is no appropriate rectangle), hence $\Sigma$ is on the edge.
\end{proof}

In the case of quantum no-signaling assemblages arises a natural question concerning the maximal rank of initial state $\rho$ from which, by appropriate choice of local measurements, one can obtain assemblage which is on the edge - we address this issue in a three-qubit setting.

\color{black}
\begin{theorem}\label{thm_main}
Let $\rho_{ABC}\in M_2(\mathbb{C})\otimes M_2(\mathbb{C})\otimes M_2(\mathbb{C})$ be a tripartite state. Consider two pairs of POVMs $\left\{M_{a|0}\right\}_{a=0}^1,\left\{M_{a|1}\right\}_{a=0}^1$ on the subsystem A and two pairs of POVMs $\left\{N_{b|0}\right\}_{b=0}^1,\left\{N_{b|1}\right\}_{b=0}^1$ on the subsystem B. Define a no-signaling assemblage via $\sigma_{ab|xy}=\mathrm{Tr}_{AB}(M_{a|x}\otimes N_{b|y}\otimes \mathds{1}\rho_{ABC})$. If $\mathrm{rank}(\rho_{ABC})\geq 3$ then $\Sigma=\left\{\sigma_{ab|xy}\right\}$ is not on the edge of the set of no-signaling assemblages. 
\end{theorem}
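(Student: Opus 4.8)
The plan is to reduce the statement, via Remark~\ref{rem} and Lemma~\ref{lemma1}, to a purely combinatorial statement about images: since $d=2$ here, $\Sigma$ fails to be on the edge exactly when there is a rectangle, i.e. aligned outcomes $a_0,a_1,b_0,b_1$, for which the four operators $\sigma_{a_xb_y|xy}$ ($x,y\in\{0,1\}$) share a common image vector; equivalently, none of them is $0$ and all those of rank one have the same image line. I would first remove the POVM generality: within a setting $M_{0|x},M_{1|x}=\mathds{1}-M_{0|x}$ commute, hence share an orthonormal eigenbasis $\{|m_x^{(0)}\rangle,|m_x^{(1)}\rangle\}$, so writing each cell as $\sigma_{ab|xy}=\sum_{i,j}p^A_{a|x}(i)\,p^B_{b|y}(j)\,\tilde\sigma_{ij|xy}$, with $\tilde\sigma_{ij|xy}=\langle m_x^{(i)}n_y^{(j)}|\rho|m_x^{(i)}n_y^{(j)}\rangle_C$ the rank-one projective cells and $p^A_{a|x}(i)=\langle m_x^{(i)}|M_{a|x}|m_x^{(i)}\rangle$, exhibits the POVM assemblage as an outcome post-processing of a projective one. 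Because $\mathrm{Im}(\sum_k c_k X_k)=\sum_k\mathrm{Im}(X_k)$ for $c_k>0$, $X_k\ge 0$, any common image vector $|\psi\rangle$ for a projective rectangle $(i_0,i_1,j_0,j_1)$ lifts to the POVM assemblage by picking $a_x$ with $p^A_{a_x|x}(i_x)>0$ and $b_y$ with $p^B_{b_y|y}(j_y)>0$. Hence it suffices to treat rank-one projective measurements.

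In the projective case I would set up the dictionary between rank deficiency and $\ker\rho$. With orthonormal bases the four operators of a block $(x,y)$ are the diagonal blocks $\sigma_{ab|xy}=\langle m_x^{(a)}n_y^{(b)}|\rho|m_x^{(a)}n_y^{(b)}\rangle_C$ of $\rho$ in the orthonormal $AB$-basis $\{|m_x^{(a)}n_y^{(b)}\rangle\}_{a,b}$. Consequently $\sigma_{ab|xy}$ has $|\xi\rangle_C$ in its kernel iff $|m_x^{(a)}n_y^{(b)}\rangle\otimes|\xi\rangle\in\ker\rho$, and when it is rank one its image is $|\xi^{\perp}\rangle$. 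Since the rank of a positive operator is subadditive over its diagonal blocks, each block satisfies $\mathrm{rank}(\rho)\le\sum_{a,b}\mathrm{rank}(\sigma_{ab|xy})$; equivalently the rank deficiencies of one block produce exactly $8-\sum_{ab}\mathrm{rank}(\sigma_{ab|xy})$ product vectors in $\ker\rho$ whose $AB$-parts are orthonormal, hence linearly independent.

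I would then argue by contradiction, assuming $\mathrm{rank}(\rho)\ge 3$ and $\Sigma$ on the edge. If $\rho_C$ is rank one then $\rho=\rho_{AB}\otimes|\psi\rangle\langle\psi|_C$ and every $\sigma_{ab|xy}=p(ab|xy)|\psi\rangle\langle\psi|$, so all nonzero cells share the image $|\psi\rangle$ and a good rectangle exists iff the box $\{p(ab|xy)\}$ admits a deterministic sub-box with four positive entries; as $\{p(ab|xy)\}$ is a genuine quantum correlation it is not a PR-box, so such a rectangle exists and $\Sigma$ is not on the edge, a contradiction. The substantial case is $\rho_C$ of rank two. Here being on the edge means every rectangle is blocked, so the rank-one cells carry images (``colors'') arranged so that each rectangle meets either a zero cell or two cells of distinct color, exactly the pattern realized in Example~\ref{ex}. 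Using the dictionary I would translate this blocking pattern into product vectors of $\ker\rho$ and show it forces $\dim\ker\rho\ge 6$, i.e. $\mathrm{rank}(\rho)\le 2$, contradicting the hypothesis.

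The hard part is precisely this last count. A single block contributes at most $8-\mathrm{rank}(\rho)\le 5$ independent kernel vectors, so six must be assembled from several blocks, and since distinct blocks use distinct, non-orthogonal measurement bases their kernel contributions need not be independent. The crux is to show that the conflict requirement forces enough blocks to be essentially all rank one with ``transverse'' color patterns, and that the resulting product vectors stay independent across blocks. A convenient way to organize the bookkeeping is to fix an $A$-eigenvector $|u\rangle=|m_x^{(a)}\rangle$ and pass to the two-qubit operator $\rho^{(u)}=\langle u|_A\rho|u\rangle_A$ on $BC$, for which $\sigma_{ab|xy}=\langle n_y^{(b)}|\rho^{(u)}|n_y^{(b)}\rangle$ and the same diagonal-block bound recurs; controlling how the kernels of the various $\rho^{(u)}$ fit together inside $\ker\rho$ is the technical heart of the argument.
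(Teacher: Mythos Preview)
Your reductions are sound: the POVM-to-PVM step via simultaneous diagonalization is correct, and the kernel dictionary $|\xi\rangle\in\ker\sigma_{ab|xy}\Leftrightarrow |m_x^{(a)}n_y^{(b)}\rangle\otimes|\xi\rangle\in\ker\rho$ is valid. The $\mathrm{rank}(\rho_C)=1$ case is also fine, since then the assemblage collapses to a quantum box tensored with a fixed pure state, and the cited fact that only PR-boxes sit on the edge of the local polytope in the $(2,2,2,2)$ scenario finishes it.

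The gap is exactly where you say it is, and it is not minor. You assert that the edge condition forces $\dim\ker\rho\ge 6$, but you give no mechanism for producing six \emph{independent} product vectors in $\ker\rho$. A single block $(x,y)$ yields at most $8-\mathrm{rank}(\rho)\le 5$ independent kernel vectors, and vectors from different blocks use non-orthogonal $AB$-bases, so their independence is the entire problem. Your sketch does not explain which blocks must be ``essentially all rank one,'' nor why the resulting kernel contributions are transverse. In the paper's analysis the obstruction to subtracting an LHS part does \emph{not} generically force whole blocks to be rank one; after the row/column rank constraints (their Lemmas on rank-$3$ and rank-$2$ two-qubit marginals) the case analysis funnels down to two specific $4{+}4$ rank-one patterns, and showing those are incompatible with $\mathrm{rank}(\rho)=3$ requires a genuinely nontrivial linear-algebra argument (their Lemma~\ref{lem5rank3}) about projections of three orthonormal vectors onto a moving family of two-planes. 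Your outline has no analogue of either the row-constraint lemmas that prune the case tree or the final impossibility lemma, and ``controlling how the kernels of the various $\rho^{(u)}$ fit together'' is a restatement of the difficulty, not a resolution of it. As written, the proposal is an approach rather than a proof.
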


\begin{sproof}
We may restrict to the case $\mathrm{rank}(\rho_{ABC})=3$ (see Appedix). Assume that all POVMs are given by PVMs. 
Note that certain structures (defined by certain positions occupied by operators with assumed ranks) of assemblage may be excluded as $\mathrm{rank}(\rho_{ABC})=3$ provides constraints on ranks for operators obtained after measurements on one of (or both) subsystems. Careful analysis of all remaining and relevant structures shows (see Appedix) that in each case one can either subtract an LHS part or considered structure is impossible when $\mathrm{rank}(\rho_{ABC})=3$. Finally, one can see (Appendix) that any assemblage $\Sigma$ with POVMs realization can be written as a nontrivial convex combination $\Sigma=p\Sigma_1+(1-p)\Sigma_2$ where $\Sigma_1$ admits a quantum realization with PVMs.
\end{sproof}

Theorem \ref{thm_main} may be compared with a case of the edge of quantum states where the set of all three-qubit states plays a role of $S_2$ and a subset of all fully separable three-qubit states in natural way stands for $S_1$. Note that there is no product vector $|\phi_{A}\rangle|\phi_{B}\rangle|\phi_{C}\rangle$ in the image of a state $\tau_{ABC}=\frac{1}{4}\left(\mathds{1}-\Pi\right)$ where $\Pi$ stands for projection on space spanned by four vectors forming an unextendible product basis (UPB) \cite{UPB03}. Therefore, $\tau_{ABC}$ is on the edge but on the other hand $\mathrm{rank}(\tau_{ABC})=4$.

Note that due to Example \ref{ex}, rank three is the smallest rank for which theorem like the one above may be formulated, since there is a choice of states with rank two from which one can obtain assemblages on the edge. The following theorem provides sufficient conditions for such situations.

\begin{theorem}\label{thm_main2}
Let $\rho_{ABC}\in M_2(\mathbb{C})\otimes M_2(\mathbb{C})\otimes M_d(\mathbb{C})$ with $\mathrm{rank}(\rho_{ABC})=2$. Assume that there exists a POVM $\left\{M_{a|x}\right\}_{a=0}^1$ on the subsystem A such that both $\rho_{a|x}=\mathrm{Tr}_{A}(M_{a|x}\otimes \mathds{1}\otimes \mathds{1}\rho_{ABC})$ are rank one and entangled. Then one can choose additional projective measurements such that $\Sigma$ given by $\sigma_{ab|xy}=\mathrm{Tr}_{AB}(M_{a|x}\otimes Q_{b|y}\otimes \mathds{1}\rho_{ABC})$ is on the edge. 
\end{theorem}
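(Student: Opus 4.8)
The plan is to apply the edge characterization established above: by Lemma \ref{lemma1} and the discussion following it, $\Sigma$ is on the edge precisely when, for every local deterministic box $L$ (equivalently, for each of the sixteen rectangles in the presentation (\ref{box})), the operators indexed by $I_L$ admit no common nonzero vector in their images. I would reduce this global requirement to a condition on just two cells of every rectangle. Relabel so that the distinguished setting is $x=0$ and write the corresponding rank-one reduced states as $\rho_{a|0}=\lambda_a|\xi_a\rangle\langle\xi_a|$ with $\lambda_a>0$ and $|\xi_a\rangle=|0\rangle_B|\eta^0_a\rangle_C+|1\rangle_B|\eta^1_a\rangle_C$ for $a=0,1$. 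Since each $\rho_{a|0}$ is entangled across $B|C$, the Schmidt components $|\eta^0_a\rangle,|\eta^1_a\rangle\in\mathbb{C}^d$ are linearly independent.

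The key computation is the effect of a projective B-measurement on these cells. Writing the elements as $Q_{b|y}=|q_{b|y}\rangle\langle q_{b|y}|$ and setting $|v_a(q)\rangle:=(\langle q|\otimes\mathds{1})|\xi_a\rangle=\overline{q_0}|\eta^0_a\rangle+\overline{q_1}|\eta^1_a\rangle$ (the bra acting on $B$), one obtains $\sigma_{ab|0y}=\lambda_a|v_a(q_{b|y})\rangle\langle v_a(q_{b|y})|$. Linear independence of $|\eta^0_a\rangle,|\eta^1_a\rangle$ yields two facts I would verify by matching coefficients against this pair: first, $|v_a(q)\rangle\neq 0$ for every unit $|q\rangle$, so each such cell is exactly rank one with image the line $\mathbb{C}|v_a(q_{b|y})\rangle$; second, the assignment $|q\rangle\mapsto\mathbb{C}|v_a(q)\rangle$ is injective, i.e. $|v_a(q)\rangle\parallel|v_a(q')\rangle$ forces $|q\rangle\parallel|q'\rangle$.

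Now every rectangle selects, within the row-pair $x=0$, a single row (a fixed outcome $a$) and exactly two cells in it, one with $y=0$ and one with $y=1$; their images are the lines $\mathbb{C}|v_a(q_{b_L|0})\rangle$ and $\mathbb{C}|v_a(q_{b_R|1})\rangle$ for some $b_L,b_R$. By injectivity these lines coincide iff $|q_{b_L|0}\rangle\parallel|q_{b_R|1}\rangle$. It therefore suffices to choose the two B-bases so that they share no common vector up to phase, for instance $\{Q_{b|0}\}=\{|0\rangle\langle 0|,|1\rangle\langle 1|\}$ and $\{Q_{b|1}\}=\{|+\rangle\langle +|,|-\rangle\langle -|\}$. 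With this choice, for every rectangle the two cells lying in the $x=0$ row-pair are rank one with distinct images, hence already admit no common image vector; a fortiori the full quadruple does not. The second A-measurement $\{M_{a|1}\}$ may be taken to be an arbitrary projective measurement, as it plays no role in this obstruction but is needed only to complete the definition of $\Sigma$. By the characterization, no LHS part can be subtracted, so $\Sigma$ is on the edge.

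The main obstacle — and the only place the hypotheses are genuinely used — is establishing the injectivity of $|q\rangle\mapsto\mathbb{C}|v_a(q)\rangle$ together with the non-vanishing of $|v_a(q)\rangle$; both rest entirely on the assumed entanglement (equivalently, linear independence of $|\eta^0_a\rangle,|\eta^1_a\rangle$) of the two rank-one reduced states. Once this is in place, the reduction to the two distinguished cells of each rectangle renders the choice of B-measurements and the final verification elementary.
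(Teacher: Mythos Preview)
Your argument is correct and follows essentially the same line as the paper's: both proofs observe that, since each $\rho_{a|0}$ is pure and entangled across $B|C$, rank-one $B$-measurements yield rank-one operators $\sigma_{ab|0y}$ whose images depend injectively on the measurement vector, so choosing the two $B$-bases to share no vector forces the two $x=0$ cells in every rectangle to be non-proportional rank-one operators, blocking any LHS subtraction via Remark~\ref{rem}. The only difference is that you spell out the injectivity via the linear independence of the Schmidt components and exhibit an explicit pair of bases, whereas the paper compresses this step into a citation of \cite{RBRH}.
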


\begin{proof}Since $\rho_{a|0}$ are pure and entangled for $a=0,1$, one can define a pair of nontrivial projective measurements $\left\{Q_{b|0}\right\}_{b=0}^1,\left\{Q_{b|1}\right\}_{b=0}^1$ on subsystem B in such a way, that for a given $a$ any two operators of the form $\sigma_{ab|0y}=\mathrm{Tr}_{B}(Q_{b|y}\otimes \mathds{1}\rho_{a|0})$ are not proportional (this is always possible - see for example \cite{RBRH}) - this concludes the proof.
\end{proof}

\section{Discussion}\label{sVI}

Motivated by the relation between sets of separable states and all states we have introduced the notion of the edge of no-signaling assemblages constructed concerning the subset of assemblages that admit an LHS model. We have provided a full characterization of edge assemblages and we have stated some necessary criteria for being on the edge. Moreover, we have discussed the notion of witnesses for no-signaling assemblages beyond LHS realization and we have related this discussion to the previously evoked concept of the edge. In the case of tripartite states (with two qubits), we have characterized pure states steerable to edge assemblages and we have provided sufficient conditions for this when states have rank $2$. Finally, we have proved a no-go type result for three-qubit states with a rank greater than or equal to $3$.

While presented results better describe the set of no-signaling assemblages, many interesting questions are remaining. In particular, it would be natural to ask for a full characterization of witnesses, at least in a case of a specific setting. Besides, further generalizations of Theorem \ref{thm_main} may be performed by taking arbitrary dimension $d$ or changing the number of uncharacterized parties or possible settings and outcomes.

The concept of the edge of the no-signaling assemblages represents - in analogy to the standard entanglement picture - the strongest form of correlations within the set of all no-signaling assemblages since edge assemblages do not admit any admixture of local hidden states. We hope that because of the latter unique property they will be an important resource for the typical security tasks in semi-device independent quantum information. The explicit analysis presented here should also provide tools for the construction of practical witnesses of the edge assemblages in the future.

\begin{acknowledgments}
\textit{Acknowledgments}- M.B., R.R.R., and P.H. acknowledge support by the Foundation for Polish Science (IRAP project, ICTQT, contract no. 2018/MAB/5, co-financed by EU within Smart Growth Operational Programme). M.B. is grateful to Marcin Marciniak for the discussion.
\end{acknowledgments}

\appendix
\section{Proof of Theorem \ref{thm_main}}

To adjust convection, consider two pairs of POVMs with binary outputs $\left\{M_{a|0}\right\}_{a=0}^1,\left\{M_{a|1}\right\}_{a=0}^1$. We say that they are the same up to relabeling if $M_{0|0}=M_{0|1}$ or $M_{0|0}=M_{1|1}$. In other case we say that POVMs are different up to relabeling. In particular the same convention will be used for pairs of projective measurements (PVMs).

In order to prove the no-go results of Theorem \ref{thm_main}, we will need to introduce a series of simple lemmas. In what follows we will use observations stated in discussion after Corollary \ref{corollary} (in the main text) - possibility of subtraction of an LHS part will be stated in terms of the existence of certain rectangles within considered assemblages.

\begin{lem}\label{lem1rank3}Let $\rho_{AB}\in M_2(\mathbb{C})\otimes M_d(\mathbb{C})$ be a bipartite state and $\rho_a=\mathrm{Tr}_{A}(P_a\otimes \mathds{1}\rho_{AB})$ for some nontrivial projective measurement $\left\{P_a\right\}_{a=0}^1$ on the subsystem A. If $\mathrm{rank}(\rho_1)=\mathrm{rank}(\rho_{2})=1$, then $\mathrm{rank}(\rho_{AB})\neq 3$. Moreover, if $d\geq3$, $\mathrm{rank}(\rho_{AB})=3$ and $\mathrm{rank}(\rho_1)=0$ imply $\mathrm{rank}(\rho_{2})=3$ and if $d=2$ none of $\rho_{a}$ have rank zero. 
\end{lem}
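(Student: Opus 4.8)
The plan is to exploit the block structure of $\rho_{AB}$ induced by the measurement. Since $\left\{P_a\right\}_{a=0}^1$ is a nontrivial projective measurement on the qubit $A$, both projectors are rank one, $P_a=|e_a\rangle\langle e_a|$, where $\left\{|e_0\rangle,|e_1\rangle\right\}$ is an orthonormal basis of $\mathbb{C}^2$. Writing $\rho_{AB}$ in this basis as the $2\times 2$ operator block matrix $\rho_{AB}=\sum_{i,j}|e_i\rangle\langle e_j|\otimes B_{ij}$ with $B_{ji}=B_{ij}^{\dagger}$, a direct computation of the partial trace gives $\rho_a=\mathrm{Tr}_A((|e_a\rangle\langle e_a|\otimes\mathds{1})\rho_{AB})=B_{aa}$. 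Thus $\rho_0$ and $\rho_1$ are precisely the two diagonal blocks, and the problem reduces to controlling the rank of a positive block matrix in terms of the ranks of its diagonal blocks.

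Next I would record two standard consequences of $\rho_{AB}\geq 0$. First, positivity of a block matrix forces $\mathrm{Im}(B_{01})\subseteq\mathrm{Im}(B_{00})$ and $\mathrm{Im}(B_{10})\subseteq\mathrm{Im}(B_{11})$, equivalently $\ker(B_{00})\subseteq\ker(B_{10})$ and $\ker(B_{11})\subseteq\ker(B_{01})$. Second, I would deduce the subadditivity bound $\mathrm{rank}(\rho_{AB})\leq\mathrm{rank}(B_{00})+\mathrm{rank}(B_{11})$. The cleanest route is to show $\ker(B_{00})\oplus\ker(B_{11})\subseteq\ker(\rho_{AB})$: for $v=(v_0,v_1)$ with $v_0\in\ker(B_{00})$ and $v_1\in\ker(B_{11})$, the image containments give $B_{10}v_0=0$ and $B_{01}v_1=0$, so $\rho_{AB}v=(B_{00}v_0+B_{01}v_1,\,B_{10}v_0+B_{11}v_1)=0$. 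Hence $\mathrm{Im}(\rho_{AB})\subseteq\mathrm{Im}(B_{00})\oplus\mathrm{Im}(B_{11})$, which yields the bound.

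The three assertions then follow quickly. If $\mathrm{rank}(\rho_0)=\mathrm{rank}(\rho_1)=1$, the subadditivity bound gives $\mathrm{rank}(\rho_{AB})\leq 2$, so in particular $\mathrm{rank}(\rho_{AB})\neq 3$. If one diagonal block vanishes, say $\rho_0=B_{00}=0$, then $\mathrm{Im}(B_{01})\subseteq\mathrm{Im}(B_{00})=\{0\}$ forces $B_{01}=B_{10}=0$, so $\rho_{AB}$ is block diagonal and $\mathrm{rank}(\rho_{AB})=\mathrm{rank}(B_{11})=\mathrm{rank}(\rho_1)$. Under $\mathrm{rank}(\rho_{AB})=3$ this gives $\mathrm{rank}(\rho_1)=3$ (which is consistent only when $d\geq 3$), proving the second claim; for $d=2$ the same identity would force $3=\mathrm{rank}(\rho_1)\leq d=2$, a contradiction, so no $\rho_a$ can have rank zero. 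I expect no serious obstacle here: the whole argument is elementary once the diagonal-block identification is in place, and the only point requiring genuine care is the correct justification of the image-containment and subadditivity facts for positive $2\times 2$ operator block matrices.
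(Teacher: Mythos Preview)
Your proof is correct, but it takes a genuinely different route from the paper's. The paper argues by spectral decomposition: assuming $\mathrm{rank}(\rho_{AB})=3$, it writes $\rho_{AB}=\sum_{i=1}^{3}\lambda_i|\psi_i\rangle\langle\psi_i|$ with $|\psi_i\rangle=\alpha_i|0\rangle|h_i\rangle+\beta_i|1\rangle|g_i\rangle$, and observes that $\mathrm{rank}(\rho_0)=\mathrm{rank}(\rho_1)=1$ forces all $|h_i\rangle$ (resp.\ $|g_i\rangle$) to coincide, so the three orthonormal $|\psi_i\rangle$ would lie in a two-dimensional subspace --- a contradiction. The rank-zero clause is handled similarly by reading off $\alpha_i=0$ for all $i$.

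Your approach instead exploits the $2\times 2$ block structure directly and proves the general subadditivity bound $\mathrm{rank}(\rho_{AB})\leq\mathrm{rank}(\rho_0)+\mathrm{rank}(\rho_1)$ via the standard image containments $\mathrm{Im}(B_{01})\subseteq\mathrm{Im}(B_{00})$, $\mathrm{Im}(B_{10})\subseteq\mathrm{Im}(B_{11})$ for positive block matrices. This is cleaner and more general: it dispatches all three assertions uniformly and never needs to assume the rank is exactly~$3$. The paper's route, on the other hand, introduces the explicit eigenvector parametrisation $(\alpha_i,\beta_i,|h_i\rangle,|g_i\rangle)$ that is reused verbatim in the subsequent Lemmas~\ref{lem2rank2} and~\ref{lem5rank3}, so within the paper it pays for itself as set-up for the harder structural analysis to come.
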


\begin{proof} Without loss of generality we may put $P_0=|0\rangle \langle 0|$ and $P_1=|1\rangle \langle 1|$. Assume that on the contrary $\mathrm{rank}(\rho_{AB})=3$. Observe that $\rho_{AB}$ can be expressed by spectral decomposition
\begin{equation}\label{eq_1}
\rho_{AB}=\sum_{i=1}^3\lambda_i|\psi_i\rangle \langle \psi_i|
\end{equation}with 
\begin{equation}\label{eq_2}
|\psi_i\rangle=\alpha_i|0\rangle|h_i\rangle+\beta_i|1\rangle|g_i\rangle
\end{equation}where $\lambda_i>0$ and $|\psi_i\rangle, |h_i\rangle, |g_i\rangle$ are normalized for any $i$. If $\mathrm{rank}(\rho_1)=\mathrm{rank}(\rho_{2})=1$, then for any $i=1,2,3$ we have $|h_i\rangle=|h\rangle$ and $|g_i\rangle=|g\rangle$ (since possible phase factors may be always incorporated in $\alpha_i$ or $\beta_i$ respectively and if $\alpha_i$ or $\beta_i$ is equal to zero form of $|h_i\rangle$ or $|g_i\rangle$ may be arbitrary). Therefore, all pure states (\ref{eq_2}) belong to the same two-dimensional subspace. Since $\left\{|\psi_i\rangle\right\}_i$ is an orthonormal system we obtain a contradiction - this concludes the first part of the proof. 

Now assume once more that $\mathrm{rank}(\rho_{AB})=3$ and $d\geq 3$. If $\mathrm{rank}(\rho_1)=0$ then by formula (\ref{eq_1}) and (\ref{eq_2}) we get that each $|\psi_i\rangle$ is proportional to $|1\rangle|g_i\rangle$. Orthogonality of $\left\{|\psi_i\rangle\right\}_i$ imply then orthogonality of $\left\{|g_i\rangle\right\}_i$ (it is possible if and only if $d\geq 3$). Since 
\begin{equation}
\rho_{2}=\sum_{i=1}^3\lambda_i|g_i\rangle \langle g_i|
\end{equation}we have $\mathrm{rank}(\rho_{2})=3$.
\end{proof}

\begin{lem}\label{lem2rank2} Let $\rho_{AB}\in M_2(\mathbb{C})\otimes M_d(\mathbb{C})$ be a bipartite state such that $\mathrm{rank}(\rho_{AB})=2$. Consider two pairs of nontrivial projective measurements $\left\{P_{a|0}\right\}_{a=0}^1,\left\{P_{a|1}\right\}_{a=0}^1$ on the subsystem A and define $\rho_{a|x}=\mathrm{Tr}_{A}(P_{a|x}\otimes\mathds{1}\rho_{AB})$. Assume that $\left\{P_{a|0}\right\}_{a=0}^1$ and $\left\{P_{a|1}\right\}_{a=0}^1$ are different (up to relabeling). If one of $\rho_{a|x}$ has rank zero then the others have rank two and they are proportional. If $x_1\neq x_2$ and $\rho_{0|x_1},\rho_{1|x_1}$ have rank one then either $\rho_{0|x_2}$ and $\rho_{1|x_2}$ have rank two or all $\rho_{a|x}$ have rank one and they are proportional.
\end{lem}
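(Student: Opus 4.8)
The plan is to use that a rank-two $\rho_{AB}$ has a two-dimensional image and to read off the ranks of the conditional states from how this image sits relative to the two measurement bases. Write $\rho_{AB}=\lambda_1|\psi_1\rangle\langle\psi_1|+\lambda_2|\psi_2\rangle\langle\psi_2|$ with $\lambda_i>0$ and $\{|\psi_i\rangle\}$ orthonormal, and for each $x$ fix the orthonormal eigenbasis $\{|e^x_0\rangle,|e^x_1\rangle\}$ of the nontrivial PVM, so that $P_{a|x}=|e^x_a\rangle\langle e^x_a|$. Then $\rho_{a|x}=\sum_i\lambda_i|\phi^{a|x}_i\rangle\langle\phi^{a|x}_i|$ with $|\phi^{a|x}_i\rangle=\langle e^x_a|\psi_i\rangle\in\mathbb{C}^d$, hence $\mathrm{rank}(\rho_{a|x})=\dim\mathrm{span}\{|\phi^{a|x}_1\rangle,|\phi^{a|x}_2\rangle\}$, which turns every rank statement into a linear-(in)dependence statement about projected $B$-vectors. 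I would record once and for all the consequence of the PVMs being different up to relabeling: no eigenprojector of one measurement coincides with an eigenprojector of the other or with its complement, so every cross-overlap $\langle e^{x_2}_a|e^{x_1}_b\rangle$ is nonzero (a vanishing overlap would give $P_{a|x_2}=P_{(1-b)|x_1}$, forcing $P_{0|0}=P_{0|1}$ or $P_{0|0}=P_{1|1}$).

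For the rank-zero statement, suppose without loss of generality $\rho_{0|0}=0$. Then $\langle e^0_0|\psi_i\rangle=0$, so both $|\psi_i\rangle$ lie in $|e^0_1\rangle\otimes\mathbb{C}^d$; writing $|\psi_i\rangle=|e^0_1\rangle|v_i\rangle$, orthonormality of the $|\psi_i\rangle$ forces $\{|v_i\rangle\}$ orthonormal, whence $\rho_{1|0}=\sum_i\lambda_i|v_i\rangle\langle v_i|$ has rank two. For the two remaining states I compute $\langle e^1_a|\psi_i\rangle=\langle e^1_a|e^0_1\rangle|v_i\rangle$, so $\rho_{a|1}=|\langle e^1_a|e^0_1\rangle|^2\,\rho_{1|0}$; by the cross-overlap remark the scalar is strictly positive, so $\rho_{0|1}$ and $\rho_{1|1}$ are positive multiples of $\rho_{1|0}$. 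Thus the three surviving states all have rank two and are mutually proportional.

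For the rank-one statement I take $x_1=0,\,x_2=1$ and assume $\rho_{0|0},\rho_{1|0}$ have rank one. Rank one of $\rho_{0|0}$ (resp. $\rho_{1|0}$) means the vectors $\langle e^0_0|\psi_i\rangle$ (resp. $\langle e^0_1|\psi_i\rangle$) span a single line $\mathbb{C}|u\rangle$ (resp. $\mathbb{C}|v\rangle$) with $|u\rangle,|v\rangle\neq0$. Writing $|\psi_i\rangle=\alpha_i|e^0_0\rangle|u\rangle+\beta_i|e^0_1\rangle|v\rangle$, the image of $\rho_{AB}$ is $\mathrm{span}\{|e^0_0\rangle|u\rangle,|e^0_1\rangle|v\rangle\}$ and orthonormality of the $|\psi_i\rangle$ gives $\alpha_1\beta_2-\alpha_2\beta_1\neq0$. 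I then split on $\{|u\rangle,|v\rangle\}$. If they are linearly independent, then $\langle e^1_a|\psi_i\rangle=\alpha_i\langle e^1_a|e^0_0\rangle|u\rangle+\beta_i\langle e^1_a|e^0_1\rangle|v\rangle$, and the $2\times2$ determinant in the $\{|u\rangle,|v\rangle\}$ coordinates equals $\langle e^1_a|e^0_0\rangle\langle e^1_a|e^0_1\rangle(\alpha_1\beta_2-\alpha_2\beta_1)$, nonzero because both overlaps are nonzero by the cross-overlap remark; hence both $\rho_{a|1}$ have rank two. If instead $|u\rangle\parallel|v\rangle\propto|w\rangle$, the image lies in $\mathbb{C}^2\otimes|w\rangle$, so $\rho_{AB}=\tau_A\otimes|w\rangle\langle w|$ with $\tau_A$ of full rank two; then every $\rho_{a|x}=\langle e^x_a|\tau_A|e^x_a\rangle\,|w\rangle\langle w|$ is a strictly positive multiple of $|w\rangle\langle w|$ (strict since $\tau_A>0$), so all four conditional states have rank one and are proportional.

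The bookkeeping is routine once this framework is in place; the step that really carries the argument is the cross-overlap remark, i.e. converting ``different up to relabeling'' into the non-vanishing of $\langle e^{x_2}_a|e^{x_1}_b\rangle$, since this is exactly what rules out accidental rank drops of $\rho_{a|x_2}$ in the generic (linearly independent) case and what pins down the product structure in the degenerate one. The only genuine dichotomy is independent-versus-proportional $\{|u\rangle,|v\rangle\}$; I would make sure that the boundary where a projected vector happens to vanish is already absorbed into the rank-one hypothesis (it only sets one of $\alpha_i,\beta_i$ to zero while leaving $\alpha_1\beta_2-\alpha_2\beta_1\neq0$), so that no separate degenerate subcase is needed.
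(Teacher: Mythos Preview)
Your proof is correct and follows essentially the same approach as the paper: spectral decomposition of the rank-two state, expansion of each $|\psi_i\rangle$ in the measurement basis on $A$, and splitting the rank-one case according to whether the two $B$-vectors are proportional. Your treatment is in fact more explicit than the paper's---you spell out the cross-overlap nonvanishing (from ``different up to relabeling'') and the $2\times 2$ determinant computation that the paper simply asserts.
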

\begin{proof}
Without loss of generality we may put $P_{0|0}=|0\rangle \langle 0|$ and $P_{1|0}=|1\rangle \langle 1|$. Assume that $\mathrm{rank}(\rho_{AB})=2$. Observe that $\rho_{AB}$ can be expressed by spectral decomposition
\begin{equation}
\rho_{AB}=\sum_{i=1}^2\lambda_i|\psi_i\rangle \langle \psi_i|
\end{equation}with 
\begin{equation}
|\psi_i\rangle=\alpha_i|0\rangle|h_i\rangle+\beta_i|1\rangle|g_i\rangle
\end{equation}where $\lambda_i>0$ and $|\psi_i\rangle, |h_i\rangle, |g_i\rangle$ are normalized for any $i$.

Firstly, without loss of generality assume that $\rho_{0|0}=0$, then $\alpha_1=\alpha_2=0$ and $|g_1\rangle \perp |g_2\rangle$. If so then the other $\rho_{a|x}$ are proportional to $\sum_{i=1}^2\lambda_i|g_i\rangle \langle g_i|$ which has rank two.

 If without loss of generality $\rho_{0|0}, \rho_{1|0}$ have rank one, then we can put $|h_i\rangle =|h\rangle$ and $|g_i\rangle =|g\rangle$. If $|h\rangle$ and $|g\rangle$ are equal (up to irrelevant phase), then $\rho_{0|1}, \rho_{1|1}$ are proportional to $|g\rangle \langle g|$. If this is not the case then $\rho_{0|1}, \rho_{1|1}$ have rank two.
\end{proof}

\begin{lem}\label{lem45}

Consider a collection of orthonormal vectors $|c_i\rangle \in \mathbb{C}^d$ and a two-dimensional subspace $H\subset \mathbb{C}^d$. Projection of each $|c_i\rangle$ on $H$ is proportional to the same vector if and only if there exists a vector $|c\rangle \in H$ othonormal to each $|c_i\rangle$. 
\end{lem}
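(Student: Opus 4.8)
The plan is to reduce the entire statement to elementary geometry inside the plane $H$ by passing to the orthogonal projection $P_H$ onto $H$. The one identity I would establish first is that, for every $|c\rangle\in H$ and every index $i$, $\langle c|c_i\rangle=\langle c|P_H|c_i\rangle$; this holds because $P_H$ is self-adjoint and fixes $|c\rangle$, so $\langle c|=\langle c|P_H$. The value of this identity is that it lets me trade orthogonality of $|c\rangle$ to the ambient vectors $|c_i\rangle$ for orthogonality, computed entirely within $H$, of $|c\rangle$ to the projected vectors $P_H|c_i\rangle$. After this reduction everything takes place inside the two-dimensional space $H$, where the geometry is transparent.

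For the implication that starts from a vector $|c\rangle\in H$ orthogonal to every $|c_i\rangle$, the identity gives $\langle c|P_H|c_i\rangle=0$ for all $i$, so each $P_H|c_i\rangle$ lies in the orthogonal complement of $|c\rangle$ taken inside $H$. Since $\dim H=2$ and $|c\rangle\neq 0$, this complement is one-dimensional, hence all the projections $P_H|c_i\rangle$ are proportional to a single spanning vector $|v\rangle$. I would also note at the outset that normalizing $|c\rangle$ is free, so the adjective ``orthonormal'' in the statement amounts to ``nonzero and orthogonal.''

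For the converse I start from the assumption that every $P_H|c_i\rangle$ is proportional to a fixed $|v\rangle\in H$. Using again that $\dim H=2$, I choose a nonzero $|c\rangle\in H$ orthogonal to $|v\rangle$ (the orthogonal complement of $|v\rangle$ inside $H$ is one-dimensional, hence nontrivial), and then $\langle c|c_i\rangle=\langle c|P_H|c_i\rangle$ is a scalar multiple of $\langle c|v\rangle=0$, so $|c\rangle$ is orthogonal to each $|c_i\rangle$; normalizing yields the required vector.

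The substantive step is the projection identity itself, after which both directions are one line of two-dimensional linear algebra; the only point needing a word of care is the degenerate case $|v\rangle=0$, i.e.\ when all projections vanish. There every $|c_i\rangle$ is orthogonal to $H$, so any nonzero $|c\rangle\in H$ works in the converse, and the proportionality hypothesis holds vacuously (all projections are the zero vector). I therefore expect this boundary case, rather than the main line of reasoning, to be the only subtlety; in particular the orthonormality of the family $\{|c_i\rangle\}$ is not essential to the argument and serves only to fix the ambient setting in which the lemma is later applied.
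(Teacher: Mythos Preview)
Your proof is correct and follows essentially the same route as the paper: both arguments reduce to the two-dimensional geometry of $H$ via the orthogonal projection and use that the orthogonal complement of a nonzero vector inside $H$ is one-dimensional. Your version is slightly more explicit about the key identity $\langle c|c_i\rangle=\langle c|P_H|c_i\rangle$ and the degenerate case where all projections vanish, and your remark that the orthonormality of the family $\{|c_i\rangle\}$ plays no role in the argument is correct.
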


\begin{proof} If $|c\rangle \in H$, then projection of each $|c_i\rangle$ is proportional to normalized $|c_{\perp}\rangle \in H$, where $|c_{\perp}\rangle\perp |c\rangle$.

On the other and, let $Q$ be a projection on $H$ and let $Q|c_i\rangle=\alpha_i|c_{\perp}\rangle$ with $\alpha_i\geq 0$ for some normalized vector $|c_{\perp}\rangle \in H$ and all $i$. Take normalized vector $|c\rangle \in H$ such that $|c_{\perp}\rangle\perp |c\rangle$. Then $Q=|c\rangle\langle c| + |c_{\perp}\rangle\langle c_{\perp}|$ and $\langle c|c_i\rangle=0$ for all $i$.
\end{proof}

\begin{lem}\label{lem5rank3} Let $\rho_{ABC}\in M_2(\mathbb{C})\otimes M_2(\mathbb{C})\otimes M_2(\mathbb{C})$ be a tripartite state. Consider two pairs of nontrivial projective measurements $\left\{P_{a|0}\right\}_{a=0}^1,\left\{P_{a|1}\right\}_{a=0}^1$ on the subsystem A and two pairs (different up to relabeling) of nontrivial projective measurements $\left\{Q_{b|0}\right\}_{b=0}^1,\left\{Q_{b|1}\right\}_{b=0}^1$ on the subsystem B. Define a no-signaling assemblage via  $\sigma_{ab|xy}=\mathrm{Tr}_{AB}(P_{a|x}\otimes Q_{b|y}\otimes \mathds{1}\rho_{ABC})$. If $\mathrm{rank}(\rho_{ABC})= 3$ then it is impossible that $\Sigma=\left\{\sigma_{ab|xy}\right\}$ takes one of the following forms
\begin{equation}\label{form}
\begin{pmatrix}
\begin{array}{cc|cc}
\phi_{00}&\phi_{01}&   X&   X\\
\phi_{10}& \phi_{11}&X&X \\ \hline
X& X&\varphi_1& \varphi_2\\
X & X&\varphi_3&   \varphi_4
\end{array}
\end{pmatrix}
\end{equation},
\begin{equation}\label{form2}
\begin{pmatrix}
\begin{array}{cc|cc}
\phi_{00}&X&   \varphi_1&   X\\
\phi_{10}& X&\varphi_3&X \\ \hline
X& \phi_{01}& X & \varphi_2\\
X & \phi_{11}&X &   \varphi_4
\end{array}
\end{pmatrix},
\end{equation}where $X$ stands for any rank two operator and $\phi_i$ stands for rank one operator proportional to $|\phi_i\rangle \langle \phi_i|$.
\end{lem}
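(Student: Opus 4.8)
\emph{Setup and reductions.} The plan is to assume $\mathrm{rank}(\rho_{ABC})=3$ together with one of the two forms and to derive a contradiction. Throughout I would work with the conditional operators $\rho^{BC}_{a|x}=\mathrm{Tr}_A(P_{a|x}\otimes\mathds{1}\,\rho_{ABC})$ on $\mathbb{C}^2_B\otimes\mathbb{C}^2_C$, so that $\sigma_{ab|xy}=\mathrm{Tr}_B(Q_{b|y}\otimes\mathds{1}\,\rho^{BC}_{a|x})$, together with their dual $\rho^{AC}_{b|y}=\mathrm{Tr}_B(Q_{b|y}\otimes\mathds{1}\,\rho_{ABC})$. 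Writing $\rho_{ABC}$ as a positive block matrix in the rank-one PVM basis $\{|0_x\rangle_A,|1_x\rangle_A\}$ of the $x$-th measurement gives two facts I will use repeatedly: $\mathrm{rank}(\rho_{ABC})\le\mathrm{rank}(\rho^{BC}_{0|x})+\mathrm{rank}(\rho^{BC}_{1|x})$, and $\mathrm{Im}(\rho_{ABC})\subseteq V_x:=|0_x\rangle_A\otimes\mathrm{Im}(\rho^{BC}_{0|x})\,\oplus\,|1_x\rangle_A\otimes\mathrm{Im}(\rho^{BC}_{1|x})$. I would first note that in both (\ref{form}) and (\ref{form2}) the two measurements on $A$ must be different up to relabeling: if they agreed, the upper and lower halves of the box would coincide, forcing a rank-one entry to equal a rank-two entry.

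\emph{Form (\ref{form}).} Here comparing entries also shows that neither the two $A$-bases nor the two $B$-bases share a common vector. I would analyze $\rho^{BC}_{a|0}$: measuring $B$ in the $y=0$ basis returns the rank-one top-left block, so by Lemma \ref{lem1rank3} $\mathrm{rank}(\rho^{BC}_{a|0})\neq 3$, while measuring $B$ in the $y=1$ basis returns the rank-two block $(0,1)$, which is impossible for a pure operator; hence $\mathrm{rank}(\rho^{BC}_{a|0})=2$. These same two facts force the image to have the $y=0$-aligned shape $\mathrm{span}\{|0\rangle_B|m_a\rangle_C,|1\rangle_B|n_a\rangle_C\}$ with $|m_a\rangle\not\parallel|n_a\rangle$ (non-proportionality being exactly what makes block $(0,1)$ rank two). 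The identical analysis of $\rho^{BC}_{a|1}$ via blocks $(1,1)$ and $(1,0)$ gives $\mathrm{rank}(\rho^{BC}_{a|1})=2$ with a $y=1$-aligned image. Thus $\mathrm{Im}(\rho_{ABC})$ is a three-dimensional subspace lying in both $V_0$ and $V_1$, which are product-structured but aligned to genuinely different local bases.

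\emph{The crux.} It remains to show $\dim(V_0\cap V_1)\le 2$. Let $\{|e_a\rangle\}$ and $\{|f_b\rangle\}$ be the $x=1$ and $y=1$ bases and put $M_{ab}=\langle e_a|_A\langle f_b|_B$. Because the legs of $V_0$ are non-proportional and the $B$-bases share no vector, each $M_{ab}$ maps $V_0$ onto $\mathbb{C}^2_C$ with a two-dimensional kernel $\ker(M_{ab}|_{V_0})$. The block $(1,1)$ entry at $(a,b)$ is $\sum_i\lambda_i M_{ab}|\psi_i\rangle\langle\psi_i|M_{ab}^{\dagger}$ for any spanning set $\{|\psi_i\rangle\}$ of $\mathrm{Im}(\rho_{ABC})$, so its rank is $\dim M_{ab}(\mathrm{Im}(\rho_{ABC}))$; requiring it to be one forces $\ker(M_{ab}|_{V_0})\subseteq\mathrm{Im}(\rho_{ABC})$ for all four $(a,b)$. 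Hence $\sum_{a,b}\ker(M_{ab}|_{V_0})$ must fit inside the three-dimensional $\mathrm{Im}(\rho_{ABC})$. The technical heart of the lemma -- and the step I expect to be hardest -- is to prove that, precisely because the two bases are different and the legs non-proportional, these four two-dimensional kernels actually span all of $V_0$; the Hadamard choice is the transparent prototype, where they collapse to two distinct kernels whose sum is already $V_0$. I would carry this out using Lemma \ref{lem45}: a deficiency in the span would produce a $y=1$-aligned vector forced to lie in the $y=0$-structured support $\mathrm{Im}(\rho_{BC})$, which the non-proportional legs forbid. This contradicts $\mathrm{rank}(\rho_{ABC})=3$.

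\emph{Form (\ref{form2}).} I would run the mirror argument with $A$ and $B$ interchanged: comparing entries shows the two $A$-bases share no vector, and measuring $B$ first, the column-crossed pattern makes one $A$-measurement give two rank-one conditionals of $\rho^{AC}_{b|0}$ and the other give two rank-two ones, so the same combination of Lemma \ref{lem1rank3} with the rank-two forcing (and Lemma \ref{lem2rank2} whenever the relevant reduced state already has rank two, supplying the proportionality dichotomy) yields $\mathrm{rank}(\rho^{AC}_{b|0})=2$ with $x$-aligned image -- but now aligned to different $A$-bases for $b=0$ and $b=1$. One then obtains $\mathrm{Im}(\rho_{ABC})\subseteq|0\rangle_B\otimes\mathrm{Im}(\rho^{AC}_{0|0})\oplus|1\rangle_B\otimes\mathrm{Im}(\rho^{AC}_{1|0})$ with the two summands aligned to genuinely different $A$-bases, and the same kernel-spanning contradiction closes the case.
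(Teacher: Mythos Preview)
Your framework is essentially the dual of the paper's own proof. Where you contain $\mathrm{Im}(\rho_{ABC})$ in a four-dimensional product-structured space $V_0$ and argue that the four two-dimensional kernels $\ker(M_{ab}|_{V_0})$ must span it, the paper works with the unique (up to phase) vector $|c_4\rangle\in\mathbb{C}^4$ orthogonal to the three coefficient vectors $|\overline{c}_i\rangle$ and shows, via Lemma \ref{lem45}, that $|c_4\rangle$ must lie in each two-dimensional span $\mathrm{span}\{L_j|x\rangle,L_j|y\rangle\}$; these are precisely the orthogonal complements of your kernels, so ``kernels span $V_0$'' and ``$\bigcap_j\mathrm{span}\{L_j|x\rangle,L_j|y\rangle\}=\{0\}$'' are the same statement. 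Your containment argument $\ker(M_{ab}|_{V_0})\subseteq\mathrm{Im}(\rho_{ABC})$ is correct, and the surjectivity of $M_{ab}|_{V_0}$ is exactly the paper's observation that $\dim\mathrm{span}\{L_j|x\rangle,L_j|y\rangle\}=2$.

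The genuine gap is that you do not prove the spanning step; you only gesture at it. Your sentence about ``a $y=1$-aligned vector forced to lie in the $y=0$-structured support $\mathrm{Im}(\rho_{BC})$'' is not a proof and conflates objects living in different spaces (the obstruction vector $|c_4\rangle$ lives in the coefficient space $\mathbb{C}^4\cong V_0$, not in $\mathrm{Im}(\rho_{BC})$). Also note a mismatch: you announce the goal ``$\dim(V_0\cap V_1)\le 2$'' but your kernel argument never uses $V_1$ and would directly give $V_0\subseteq\mathrm{Im}(\rho_{ABC})$, which is already a contradiction. The paper settles the spanning/intersection step by brute force: it shows that for $|c_4\rangle$ with no zero coordinate the three vectors $L_1^{-1}|c_4\rangle,L_2^{-1}|c_4\rangle,L_3^{-1}|c_4\rangle$ are linearly independent (an explicit $3\times 3$ determinant), so they cannot all lie in the two-dimensional $\mathrm{span}\{|x\rangle,|y\rangle\}$; the remaining cases (one or two zero coordinates) are handled by a $2\times 2$ determinant together with the normalisation of the $|\phi_{kl}\rangle$. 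That computation \emph{is} the content of the lemma for form (\ref{form}); Lemma \ref{lem45} only converts the rank-one constraints into the membership condition on $|c_4\rangle$ and does not by itself yield the emptiness of the intersection.

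For form (\ref{form2}) your reduction is too quick. When you condition on $B$ first, the two summands $\mathrm{Im}(\rho^{AC}_{0|0})$ and $\mathrm{Im}(\rho^{AC}_{1|0})$ are aligned to \emph{different} $A$-bases, so the situation is not literally ``the same'' as in form (\ref{form}). The paper resolves this with a controlled unitary $U_{AB}=\mathds{1}_A\otimes|0\rangle\langle 0|_B+U_A\otimes|1\rangle\langle 1|_B$ that re-aligns the second summand, after which the eigenvectors again take the form $\sum_{kl}c^i_{kl}|kl\rangle|\phi_{kl}\rangle$. The resulting transfer matrices $S_j$ are no longer invertible---each has a single zero on the diagonal, at a different position for each $j$---and this actually makes the intersection step \emph{easier}: any $|c_4\rangle$ in $\mathrm{span}\{S_j|x\rangle,S_j|y\rangle\}$ is forced to vanish in the $j$-th coordinate, hence $|c_4\rangle=0$. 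Your dual kernel language can certainly be made to work here too, but you would need to supply this structural observation (or an equivalent one) rather than assert that the form-(\ref{form}) argument repeats verbatim.
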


\begin{proof}Without loss of generality we may put $P_{0|0}=Q_{0|0}=|0\rangle \langle 0|$, $P_{1|0}=Q_{1|0}=|1\rangle \langle 1|$ and $P_{0|1}=|0'\rangle \langle 0'|$, $P_{1|1}=|1'\rangle \langle 1'|$, $Q_{0|1}=|0''\rangle \langle 0''|$, $Q_{1|1}=|1''\rangle \langle 1''|$ where 
\begin{equation}\nonumber
|0'\rangle=\overline{\alpha} |0\rangle +\overline{\beta} |1\rangle,\ |1'\rangle=\beta |0\rangle -\alpha |1\rangle,
\end{equation}
\begin{equation}\nonumber
|0''\rangle=\overline{\gamma} |0\rangle +\overline{\delta} |1\rangle,\ |1''\rangle=\delta |0\rangle -\gamma |1\rangle.
\end{equation}Note that $\gamma,\delta\neq 0$ due to general assumption (different pair of PVMs on the subsystem B). Observe that since $\mathrm{rank}(\rho_{ABC})=3$ we have a spectral decomposition
\begin{equation}\label{spectral}
\rho_{ABC}=\sum_{i=1}^3\lambda_i|\psi_i\rangle \langle \psi_i|.
\end{equation}Finally, define vectors $|x\rangle, |y\rangle\in \mathbb{C}^4$ via equations
\begin{equation}
|\phi_{kl}\rangle=x_{kl}|0\rangle+y_{kl}|1\rangle.
\end{equation}Observe that for any pair $k,l$ coefficients $x_{kl},y_{kl}$ cannot be both equal to zero (due to normalization).

In the first part of the proof, assume on the contrary that $\Sigma$ is of the form (\ref{form}). If so, then $\alpha,\beta,\gamma,\delta\neq 0$ and $|\psi_i\rangle$ from (\ref{spectral}) have the following form
\begin{equation}\nonumber
|\psi_i\rangle=c^i_{00}|00\rangle|\phi_{00}\rangle+c^i_{01}|01\rangle|\phi_{01}\rangle+c^i_{10}|10\rangle|\phi_{10}\rangle+c^i_{11}|11\rangle|\phi_{11}\rangle,
\end{equation}
where $|c_i\rangle=(c^i_{00}, c^i_{01}, c^i_{10}, c^i_{11})\in \mathbb{C}^4$ are normalized and $|c_i\rangle\perp |c_j\rangle$ for $i\neq j$.  Obviously $\mathrm{dim}\left(\mathrm{span}\left\{|x\rangle, |y\rangle\right\}\right)=2$ (the last statement is true because by the no-signaling condition for the first row of (\ref{form}), $|\phi_{00}\rangle$ and $|\phi_{01}\rangle$ are not proportional, so $(x_{00},x_{01}), (y_{00},y_{01})$ are linearly independent). Finally, let us introduce operators $L_j$ for $j=1,2,3,4$
\begin{equation}\nonumber
L_1=\mathrm{diag}(\alpha\gamma, \alpha\delta, \beta\gamma,\beta\delta),\ L_2=\mathrm{diag}(\alpha\overline{\delta}, -\alpha\overline{\gamma},\beta\overline{\delta}, -\beta\overline{\gamma}),
\end{equation}
\begin{equation}\nonumber
L_3=\mathrm{diag}(\overline{\beta}\gamma, \overline{\beta}\delta, -\overline{\alpha}\gamma,-\overline{\alpha}\delta),\ L_4=\mathrm{diag}(\overline{\beta}\overline{\delta}, -\overline{\beta}\overline{\gamma}, -\overline{\alpha}\overline{\delta},\overline{\alpha}\overline{\gamma}).
\end{equation}Note that operators $L_j$ represents action of measurements $P_{a|1}, Q_{b|1}$ (related to positions in (\ref{form}) occupied by $\varphi_j$) on vectors $|\psi_i\rangle$. Indeed, for example we have
\begin{equation}
\mathrm{Tr}_{AB}(P_{0|1}\otimes Q_{0|1}\otimes \mathds{1}|\psi_i\rangle \langle \psi_i|)=|\tilde{\varphi}_{1,i}\rangle \langle \tilde{\varphi}_{1,i}|
\end{equation}where
\begin{widetext}
\begin{equation}\label{obs1}
|\tilde{\varphi}_{1,i}\rangle=c^i_{00}\alpha\gamma|\phi_{00}\rangle+c^i_{01}\alpha\delta|\phi_{01}\rangle+c^i_{10}\beta\gamma|\phi_{10}\rangle+c^i_{11}\beta\delta|\phi_{11}\rangle=\langle \overline{c}_i|L_1|x\rangle |0\rangle+\langle \overline{c}_i|L_1|y\rangle |1\rangle
\end{equation}and similarly
\begin{equation}\label{obs2}
|\tilde{\varphi}_{2,i}\rangle=c^i_{00}\alpha\overline{\delta}|\phi_{00}\rangle-c^i_{01}\alpha\overline{\gamma}|\phi_{01}\rangle+c^i_{10}\beta\overline{\delta}|\phi_{10}\rangle-c^i_{11}\beta\overline{\gamma}|\phi_{11}\rangle=\langle \overline{c}_i|L_2|x\rangle |0\rangle+\langle \overline{c}_i|L_2|y\rangle |1\rangle,
\end{equation}
\begin{equation}\label{obs3}
|\tilde{\varphi}_{3,i}\rangle=c^i_{00}\overline{\beta}\gamma|\phi_{00}\rangle+c^i_{01}\overline{\beta}\delta|\phi_{01}\rangle-c^i_{10}\overline{\alpha}\gamma|\phi_{10}\rangle-c^i_{11}\overline{\alpha}\delta|\phi_{11}\rangle=\langle \overline{c}_i|L_3|x\rangle |0\rangle+\langle \overline{c}_i|L_3|y\rangle |1\rangle,
\end{equation}
\begin{equation}\label{obs4}
|\tilde{\varphi}_{4,i}\rangle=c^i_{00}\overline{\beta}\overline{\delta}|\phi_{00}\rangle-c^i_{01}\overline{\beta}\overline{\gamma}|\phi_{01}\rangle-c^i_{10}\overline{\alpha}\overline{\delta}|\phi_{10}\rangle+c^i_{11}\overline{\alpha}\overline{\gamma}|\phi_{11}\rangle=\langle \overline{c}_i|L_4|x\rangle |0\rangle+\langle \overline{c}_i|L_4|y\rangle |1\rangle,
\end{equation}
\end{widetext}with $|\overline{c}_i\rangle=(\overline{c}^i_{00}, \overline{c}^i_{01}, \overline{c}^i_{10}, \overline{c}^i_{11})$. By assumption we see that operators in the intersection of the third and the fourth rows with the third and the fourth columns in (\ref{form}) are rank one, so for any fixed $j=1,2,3,4$, all $|\tilde{\varphi}_{j,i}\rangle$ are proportional to some $|\varphi_{j}\rangle$ (some but not all of them may be equal to zero). Observe that by (\ref{obs1}-\ref{obs4}) this is true if for any fixed $j=1,2,3,4$, projection of each $|\overline{c}_i\rangle$ on $\mathrm{span}\left\{L_j|x\rangle,L_j|y\rangle\right\}$ is the same up to normalization or equal to zero. From Lemma \ref{lem45} we see that this is true only if there exists a common vector $|c_4 \rangle\in \mathbb{C}^4$ orthonormal to others $|\overline{c}_i\rangle$ and such that $|c_4 \rangle\in \mathrm{span}\left\{L_j|x\rangle, L_j|y\rangle\right\}$ for all $j$ (note that this is true since $|c_4 \rangle$ is unique up to phase due to orthogonality condition in $\mathbb{C}^4$).

To conclude the proof it is enough to show that it is impossible to find a vector $|c_4 \rangle$ with such property that $L_j^{-1}|c_4 \rangle\in \mathrm{span}\left\{|x\rangle, |y\rangle\right\}$ for all $j$. Put $|c_4 \rangle=(c_1,c_2,c_3,c_4)$. Observe that if $|c_4 \rangle$ has three coefficients equal to zero, then all $|c_i \rangle$ for $i=1,2,3$ have a common zero coefficient, i.e. there is a pair $k,l$ such that $c_{kl}^i=0$ for all $i=1,2,3$. Therefore, $|c_4 \rangle$ has at most two coefficients equal to zero. 

Assume first that $|c_4 \rangle$ has no zero coefficient. Then $L_j^{-1}|c_4 \rangle$ for $j=1,2,3$ form a linearly independent set. Indeed, we have 
\begin{widetext}
\begin{equation}\nonumber
\mathrm{det}\begin{pmatrix}
\begin{array}{ccc}
\alpha^{-1}\gamma^{-1}c_1 & \alpha^{-1}\overline{\delta}^{-1}c_1  &  \overline{\beta}^{-1}\gamma^{-1}c_1\\
\alpha^{-1}\delta^{-1}c_2 & -\alpha^{-1}\overline{\gamma}^{-1}c_2  & \overline{\beta}^{-1}\delta^{-1} c_2\\
\beta^{-1}\gamma^{-1}c_3 &  \beta^{-1}\overline{\delta}^{-1}c_3  & -\overline{\alpha}^{-1}\gamma^{-1}c_3
\end{array}
\end{pmatrix}=c_1c_2c_3\alpha^{-1}\gamma^{-1}\left(|\alpha|^{-2}+|\beta|^{-2}\right)\left(|\gamma|^{-2}+|\delta|^{-2}\right)\neq 0.
\end{equation}
\end{widetext}
Therefore, $L_j^{-1}|c_4 \rangle\in \mathrm{span}\left\{|x\rangle, |y\rangle\right\}$ cannot be true for all $j=1,2,3$, as $\mathrm{dim}\left(\mathrm{span}\left\{|x\rangle, |y\rangle\right\}\right)=2$. This leads to a contradiction. 

Assume then that $|c_4 \rangle$ has one or two coefficients equal to zero - for example let $c_1,c_2\neq 0$ and $c_3=0$. then we have
\begin{widetext}
\begin{equation}\nonumber
\mathrm{det}\begin{pmatrix}
\begin{array}{ccc}
\alpha^{-1}\gamma^{-1}c_1 & \alpha^{-1}\overline{\delta}^{-1}c_1 \\
\alpha^{-1}\delta^{-1}c_2 & -\alpha^{-1}\overline{\gamma}^{-1}c_2
\end{array}
\end{pmatrix}=-c_1c_2\alpha^{-2}\left(|\gamma|^{-2}+|\delta|^{-2}\right)\neq 0,
\end{equation}
\end{widetext}so in particular $L_1^{-1}|c_4 \rangle$ and $L_2^{-1}|c_4 \rangle$ are linearly independent. Note that if $L_j^{-1}|c_4 \rangle \in \mathrm{span}\left\{|x\rangle, |y\rangle\right\}$ for all $j=1,2$, then $\mathrm{span}\left\{L_1^{-1}|c_4 \rangle, L_2^{-1}|c_4 \rangle\right\}=\mathrm{span}\left\{|x\rangle, |y\rangle\right\}$ and by $c_3=0$ we have $x_{10}=y_{10}=0$ which is a contradiction with normalization of $|\phi_{10}\rangle$. By similar calculations based on properties of $L_j^{-1}$, the same conclusion can be shown for any $|c_4 \rangle$ with one or two coefficients equal to zero. In the end, by negation of condition from lemma \ref{lem45}, it is impossible to obtain assemblage of the form (\ref{form}) starting from nontrivial projective measurement on rank three state $\rho_{ABC}$.

To conclude the proof, consider a second possibility, namely assume that $\Sigma$ is of the form (\ref{form2}). If so, then $\alpha,\beta,\gamma,\delta\neq 0$. Define a unitary $U_{AB}$ acting on subsystem AB 
\begin{equation}
U_{AB}=\mathds{1}_A\otimes|0\rangle \langle 0|+ U_A\otimes|1\rangle \langle 1|
\end{equation}with $U_A$ being a unitary such that $U^\dagger_A|0\rangle=|0'\rangle, U^\dagger_A|1\rangle=|1'\rangle$. Obviously, $\tilde{\rho}_{ABC}=U_{AB}\rho_{ABC}U^\dagger_{AB}$ is still a state with rank equal to three. By the properties of $U_A$ and assumed form of $\Sigma$, one can see that $\tilde{\Sigma}=\left\{\tilde{\sigma}_{ab|xy}\right\}$ defined by $\tilde{\sigma}_{ab|xy}=\mathrm{Tr}_{AB}(P_{a|x}\otimes Q_{b|y}\otimes \mathds{1}\tilde{\rho}_{ABC})$ have the following form
\begin{equation}
\begin{pmatrix}
\begin{array}{cc|cc}
\phi_{00}&\phi_{01}&   \ldots&    \ldots\\
\phi_{10}& \phi_{11}& \ldots& \ldots \\ \hline
X&\ldots &  \ldots &  \ldots\\
X &\ldots & \ldots &    \ldots
\end{array}
\end{pmatrix}.
\end{equation}
If so, then similarly to the first part of the proof, one can write a spectral decomposition $\tilde{\rho}_{ABC}=\sum_{i=1}^3\lambda_i|\tilde{\psi}_i\rangle \langle \tilde{\psi}_i|$ with $|\tilde{\psi}_i\rangle=U_{AB}|\psi_i\rangle$ and 
\begin{equation}\nonumber
|\tilde{\psi}_i\rangle=c^i_{00}|00\rangle|\phi_{00}\rangle+c^i_{01}|01\rangle|\phi_{01}\rangle+c^i_{10}|10\rangle|\phi_{10}\rangle+c^i_{11}|11\rangle|\phi_{11}\rangle,
\end{equation}where $|c_i\rangle$ for $i=1,2,3$ are like in the first part of the proof. Using once more properties of $U_A$, we see that orthonormal vectors from spectral decomposition of initial state $\rho_{ABC}$ (see (\ref{spectral})) can be given as 
\begin{widetext}
\begin{equation}\nonumber
|\psi_i\rangle=c^i_{00}|00\rangle|\phi_{00}\rangle+c^i_{01}\overline{\alpha}|01\rangle|\phi_{01}\rangle+c^i_{11}\beta|01\rangle|\phi_{11}\rangle +c^i_{10}|10\rangle|\phi_{10}\rangle+c^i_{01}\overline{\beta}|11\rangle|\phi_{01}\rangle-c^i_{11}\alpha|11\rangle|\phi_{11}\rangle  
\end{equation}
\end{widetext}
We can represents action of appropriate measurements $P_{a|x}, Q_{b|y}$ (related to positions in (\ref{form2}) occupied by $\varphi_j$) on vectors $|\psi_i\rangle$ by introduction of vectors $|\tilde{\varphi}_{j,i}\rangle$ given as
\begin{widetext}
\begin{equation}
|\tilde{\varphi}_{1,i}\rangle=c^i_{00}\gamma|\phi_{00}\rangle+c^i_{01}\delta\overline{\alpha}|\phi_{01}\rangle+c^i_{11}\delta\beta|\phi_{11}\rangle=\langle \overline{c}_i|S_1|x\rangle |0\rangle+\langle \overline{c}_i|S_1|y\rangle |1\rangle.
\end{equation}
\begin{equation}
|\tilde{\varphi}_{2,i}\rangle=c^i_{00}\overline{\delta}\alpha|\phi_{00}\rangle-c^i_{01}\overline{\gamma}|\phi_{01}\rangle+c^i_{10}\overline{\delta}\beta|\phi_{10}\rangle=\langle \overline{c}_i|S_2|x\rangle |0\rangle+\langle \overline{c}_i|S_2|y\rangle |1\rangle.
\end{equation}
\begin{equation}
|\tilde{\varphi}_{3,i}\rangle=c^i_{01}\delta\overline{\beta}|\phi_{01}\rangle+c^i_{10}\gamma|\phi_{10}\rangle-c^i_{11}\delta\alpha|\phi_{11}\rangle=\langle \overline{c}_i|S_3|x\rangle |0\rangle+\langle \overline{c}_i|S_3|y\rangle |1\rangle.
\end{equation}
\begin{equation}
|\tilde{\varphi}_{4,i}\rangle=c^i_{00}\overline{\delta}\overline{\beta}|\phi_{00}\rangle-c^i_{10}\overline{\delta}\overline{\alpha}|\phi_{10}\rangle-c^i_{11}\overline{\gamma}|\phi_{11}\rangle=\langle \overline{c}_i|S_4|x\rangle |0\rangle+\langle \overline{c}_i|S_4|y\rangle |1\rangle.
\end{equation}
\end{widetext}where
\begin{equation}\nonumber
S_1=\mathrm{diag}(\gamma, \delta\overline{\alpha},0,\delta\beta),\ S_2=\mathrm{diag}(\overline{\delta}\alpha,-\overline{\gamma},\overline{\delta}\beta,0),
\end{equation}
\begin{equation}\nonumber
S_3=\mathrm{diag}(0,\delta\overline{\beta},\gamma, -\delta\alpha),\ S_4=\mathrm{diag}(\overline{\delta}\overline{\beta},0, -\overline{\delta}\overline{\alpha},-\overline{\gamma}).
\end{equation}

Obviously (\ref{form2}) implies that for any fixed $j=1,2,3,4$, all $|\tilde{\varphi}_{j,i}\rangle$ are proportional to some $|\varphi_{j}\rangle$ (some but not all of them may be equal to zero). Due to no-signaling conditions for the first and the second column in (\ref{form2}), $|\phi_{00}\rangle$ and $|\phi_{10}\rangle$ are linearly independent and the same is true for $|\phi_{01}\rangle$ and $|\phi_{11}\rangle$. If so, then pairs $(x_{00},x_{10}), (y_{00},y_{10})$ and $(x_{01},x_{11}), (y_{01},y_{11})$ are linearly independent as well. Note that from this, for each $j=1,2,3,4$, $\mathrm{dim}(\mathrm{span}\left\{S_j|x\rangle, S_j|y\rangle\right\})=2$. Therefore, we see that the structure of (\ref{form2}) implies (according to Lemma \ref{lem45}) that (unique up to the phase in $\mathbb{C}^4$) vector $|c_4 \rangle$ orthonormal to each $|\overline{c}_i \rangle$ for $i=1,2,3$ must satisfy $|c_4 \rangle\in \mathrm{span}\left\{S_j|x\rangle, S_j|y\rangle\right\}$ for any $j=1,2,3,4$. However, by the form of diagonal matrices $S_j$, $|c_4 \rangle=(0,0,0,0)$ which is a contradiction and the second part of the proof is concluded.
\end{proof} \\

Now we are ready to present a proof of Theorem \ref{thm_main}.

\begin{proof} It is enough to show the statement in the case of $\mathrm{rank}(\rho_{ABC})=3$. Indeed, let $\mathrm{rank}(\rho_{ABC})> 3$, then $\rho_{ABC}=\alpha\rho_{ABC}^{(1)}+\beta\rho_{ABC}^{(2)}$ where $\alpha,\beta>0$ and $\rho_{ABC}^{(1)},\rho_{ABC}^{(2)}$ are states such that $\mathrm{rank}(\rho_{ABC}^{(1)})=3$. Observe that 
$\sigma_{ab|xy}=\alpha\sigma^{(1)}_{ab|xy}+\beta\sigma^{(2)}_{ab|xy}$ with $\sigma^{(i)}_{ab|xy}=\mathrm{Tr}_{AB}(M_{a|x}\otimes N_{b|y}\otimes \mathds{1}\rho_{ABC}^{(i)})$ for $i=1,2$. Therefore, if any assemblage obtained from rank three state in not on the edge, the same must be true for assemblage obtained from state with higher rank. Therefore we set $\mathrm{rank}(\rho_{ABC})=3$.

From now on unless stated otherwise, let us, instead of general POVMs, consider only nontrivial projective measurements with elements $P_{a|x},Q_{b|y}$ (different from $0$ or $\mathds{1}$) acting on subsystem A and B respectively. Therefore we are interested in structural analysis of no-signaling assemblage $\Sigma=\left\{\sigma_{ab|xy}\right\}$  given by $\sigma_{ab|xy}=\mathrm{Tr}_{AB}(P_{a|x}\otimes Q_{b|y}\otimes \mathds{1}\rho_{ABC})$. For two given pairs of nontrivial projective measurement $\left\{P_{a|0}\right\}_{a=0}^1,\left\{P_{a|1}\right\}_{a=0}^1$ on the subsystem $A$ we define $\rho_{a|x}=\mathrm{Tr}_{A}(P_{a|x}\otimes\mathds{1} \otimes \mathds{1} \rho_{ABC})$ and we put $r_1=\mathrm{rank}(\rho_{0|0}), r_2=\mathrm{rank}(\rho_{1|0}), r_3=\mathrm{rank}(\rho_{0|1}), r_4=\mathrm{rank}(\rho_{1|1})$. In similar way we introduce $\pi_{b|y}=\mathrm{Tr}_{B}(\mathds{1}\otimes Q_{b|y}\otimes\mathds{1}  \rho_{ABC})$ for measurements on the subsystem B and $q_i, i=1,2,3,4$ for related ranks of this operators. In principal $r_1,r_2,r_3,r_4\in \left\{0,1,2,3\right\}$ and the same is true for $q_i$. Without loss of generality we assume that $r_1\geq r_2$ and $r_3\geq r_4$ (it is possible due to relabeling). According to Lemma \ref{lem1rank3} we may restrict our attention to $(r_1,r_2), (r_3,r_4)$ of the form $(3,0), (3,1),(3,2),(3,3), (2,1),(2,2)$. 

In the remaining part of the proof, we will show that all forms of $\Sigma$ compatible with all possible values of $r_i,q_j$ (i.e. not excluded by $\mathrm{rank}(\rho_{ABC})=3$) are either not on the edge or are simply impossible to obtain (note that some of the considered structures, from which one can subtract an LHS part, could still be impossible to obtain due to some reasons which are not discussed in this proof - this is, however, irrelevant for the considered question, if a given structure is not an obstruction for the subtraction of an LHS part, then its existence is not important).

Observe that if one can show that a given assemblage, defined with measurements $\left\{Q_{b|0}\right\}_{b=0}^1,\left\{Q_{b|1}\right\}_{b=0}^1$  is not on the edge, then assemblage, defined with the same $\left\{Q_{b|0}\right\}_{b=0}^1$ and  $\left\{Q_{b|1}\right\}_{b=0}^1$ equal to $\left\{Q_{b|0}\right\}_{b=0}^1$ (up to relabeling), must be outside the edge as well (the same observation is true for measurements on the subsystem A), since the initial rectangle from which one can subtract an LHS part will be changed into rectangle with positions occupied by two of operators presented in the previous rectangle - as it can be seen in the following example
\begin{equation}\nonumber
\begin{pmatrix}
\begin{array}{cc|cc}
 \color{red}\sigma_{00|00} \color{black} &  \ldots &  \color{red}\sigma_{00|01}\color{black} & \ldots  \\  
\ldots & \ldots& \ldots  & \ldots \\ \hline
 \ldots & \ldots & \ldots&  \ldots \\
    \color{red}\sigma_{10|10}\color{black} & \ldots &  \color{red}\sigma_{10|11}\color{black} & \ldots
\end{array}
\end{pmatrix} \rightarrow 
\begin{pmatrix}
\begin{array}{cc|cc}
 \color{red}\sigma_{00|00} \color{black} &  \ldots &  \color{red}\sigma_{00|00}\color{black} & \ldots  \\  
\ldots & \ldots& \ldots  & \ldots \\ \hline
 \ldots & \ldots & \ldots&  \ldots \\
    \color{red}\sigma_{10|10}\color{black} & \ldots &  \color{red}\sigma_{10|10}\color{black} & \ldots
\end{array}
\end{pmatrix}.
\end{equation}Therefore, if we show that for considered values of $r_i$ certain form of assemblage given by $\left\{Q_{b|0}\right\}_{b=0}^1,\left\{Q_{b|1}\right\}_{b=0}^1$ different up to relabeling is not on the edge, then the same is true when we omit assumption of difference of measurements (for assemblage of this certain form).

We will show that if $r_1=3$ and $r_3=3$ or $r_3=2$, then considered assemblages coming from measurements on rank three states are not on the edge. Observe that by performing measurements $\left\{Q_{b|0}\right\}_{b=0}^1,\left\{Q_{b|1}\right\}_{b=0}^1$ on $\rho_{a|x}$ we obtain some particular row from assemblage $\Sigma$
\begin{equation}\label{sub}
\begin{matrix}
\begin{array}{cc|cc}
   \sigma_{a0|x0}& \sigma_{a1|x0}&   \sigma_{a0|x1} &   \sigma_{a1|x1}.
\end{array}
\end{matrix}
\end{equation} Since $\rho_{a|x}\in M_2(\mathbb{C})\otimes M_2(\mathbb{C})$ and $\mathrm{rank}(\rho_{a|x})=3$ or $\mathrm{rank}(\rho_{a|x})=2$, we may use Lemma \ref{lem1rank3} and Lemma \ref{lem2rank2} in order to state what are possible ranks of operators in a considered row (\ref{sub}). From now on $\psi_i\in M_2(\mathbb{C})$ denotes some operator with rank one proportional to state $|\psi_i\rangle \langle \psi_i|$, 
$\psi'_i\in M_2(\mathbb{C})$ denotes either $\psi_i$ or $0$ and $X\in M_2(\mathbb{C})$ denotes any operator with full rank (i.e. rank two). Note that only possible structures of rows which are relevant for certification that a given assemblage is not on the edge, are those which have the largest number of operators with rank zero and rank one (i.e. if one can subtract an LHS part from assemblage with row of a certain structure, one can also subtract an LHS part from assemblages with row which arise from the previous structure by changing operator on some position to operator with rank two). 

Firstly, assume that $\left\{Q_{b|0}\right\}_{b=0}^1,\left\{Q_{b|1}\right\}_{b=0}^1$ are different (up to relabeling). According to Lemma \ref{lem1rank3}, when $\mathrm{rank}(\rho_{a|x})=3$, only relevant (up to relabeling $y\rightarrow y\oplus 1$ and $b_1\rightarrow b_1\oplus 1,b_2\rightarrow b_2\oplus 1$) rows are given as
\begin{equation}\label{row1}
\begin{matrix}
\begin{array}{cc|cc}
    \psi_1&   X&  \psi_2 & X.
\end{array}
\end{matrix}
\end{equation}
In the analogous way, according to Lemma \ref{lem2rank2}, when $\mathrm{rank}(\rho_{a|x})=2$, only relevant (up to relabeling $y\rightarrow y\oplus 1$ and $b_1\rightarrow b_1\oplus 1,b_2\rightarrow b_2\oplus 1$) rows are given as

\begin{equation}\label{row2}
\begin{matrix}
\begin{array}{cc|cc}
   0&   X&  X & X,
\end{array}
\end{matrix}
\end{equation}
\begin{equation}\label{row3}
\begin{matrix}
\begin{array}{cc|cc}
   \psi_1&   \psi_2&  X & X.
\end{array}
\end{matrix}
\end{equation}
\begin{equation}\label{row4}
\begin{matrix}
\begin{array}{cc|cc}
   \psi&   \psi&  \psi & \psi.
\end{array}
\end{matrix}
\end{equation}
\begin{equation}\label{row5}
\begin{matrix}
\begin{array}{cc|cc}
 \psi_1& X&  \psi_2 & X.
\end{array}
\end{matrix}
\end{equation}

As $r_1=3$ and $r_3=2,3$, we should analyze all assemblages with the first row and the third row described by (\ref{row1}) or (\ref{row2}-\ref{row5}) respectively (up to relabeling). For example by taking (\ref{row1}) with relabeling and (\ref{row4}) we get
\begin{equation}
\begin{pmatrix}
\begin{array}{cc|cc}
\color{red}X\color{black}&   \psi_1&   \psi_2&   \color{red}X\color{black}\\
             \ldots& \ldots&\ldots&\ldots \\ \hline
               \color{red}\psi\color{black}&   \psi&  \psi& \color{red}\psi\color{black}\\
							\ldots& \ldots&\ldots&\ldots
\end{array}
\end{pmatrix}
\end{equation}and one can see that there exists a choice of rectangle such that it does not contain any rank zero operator or any pair of rank one operators which are not proportional, i.e. such assemblage is not on the edge. Analysis of all possibilities like the one above (with relabeling) shows that there is only one problematic case given by
\begin{equation}\label{prob1}
\begin{pmatrix}
\begin{array}{cc|cc}
\psi_1&  X&   \psi_2&   X\\
\ldots& \ldots&\ldots&\ldots \\ \hline
X& \psi_3&X& \psi_4\\
\ldots& \ldots&\ldots&\ldots
\end{array}
\end{pmatrix}.
\end{equation}When (\ref{prob1}) is obtained by the same (up to relabeling) pair of measurements on the subsystem B, then $\psi_1=\psi_2$ and there is an LHS part that one can subtract from (\ref{prob1}), therefore we may still assume that $\left\{Q_{b|0}\right\}_{b=0}^1,\left\{Q_{b|1}\right\}_{b=0}^1$ are different (up to relabeling). Note that with (\ref{prob1}), neither $r_2$ nor $r_4$ can be equal to $0$ as in this case there would be a column in (\ref{prob1}) with operators with rank two, rank one and rank zero, which would be in contradiction with a possible form of this column given by (\ref{row1}-\ref{row5}) (as in that case $\left\{P_{a|0}\right\}_{a=0}^1,\left\{P_{a|1}\right\}_{a=0}^1$ should be different up to relabeling and column must be relater to some $q_i=2,3$). In similar fashion, if assemblage is of the form (\ref{prob1}), and $r_2$ or $r_4$ is equal to $1$, then considered assemblage is not on the edge. Indeed, it is enough to consider $r_2=1$. In that case once more $\left\{P_{a|0}\right\}_{a=0}^1,\left\{P_{a|1}\right\}_{a=0}^1$ are different (up to relabeling) an since $q_i=2,3$ for each $i=1,2,3,4$, (\ref{row1}-\ref{row5}) imply the following structure 
\begin{equation}
\begin{pmatrix}
\begin{array}{cc|cc}
\color{red}\psi_1\color{black}&  X&   \psi_2&   \color{red}X\color{black}\\
\phi_1& \phi_2&\phi_3&\phi_4 \\ \hline
X& \psi_3&X& \psi_4\\
\color{red}X\color{black}& X&X&\color{red}X\color{black}
\end{array}
\end{pmatrix}
\end{equation}from which one can certainly subtract an LHS part.

As a results of this considerations we see that analysis of (\ref{prob1}) boils down to $r_2,r_4=2,3$. The only situation problematic to subtraction of an LHS part from an assemblage is related to the case when the second and the fourth rows are given by $(\ref{row5})$ and assemblage (\ref{prob1}) is of the form given by (\ref{form2}) - this structure is however impossible if $\mathrm{rank}(\rho_{ABC})=3$ according to Lemma \ref{lem5rank3} (with different up to relabeling measurements on the subsystem B). This concludes the first part of the proof.

Due to the whole above analysis, it is enough to discuss only the case when $(r_1,r_2), (r_3,r_4)$ are of the form $(2,1),(2,2)$. Without the loss of generality we may put $r_1, r_3=2$. Analysis of (\ref{row2}-\ref{row5}) shows all possible obstructions for subtraction of an LHS part. The first potential obstruction is related to the case where both the first and the third row are like (\ref{row5}) and assemblage is given by (\ref{prob1}) - this obstruction is excluded due to the previous discussion. Another potential obstruction is related to the case where both the first and the third row have zero operators but in different columns. If both measurements on the subsystem A are the same up to relabeling one can subtract an LHS part from assemblage with mentioned property, so the only relevant case is without loss of generality given by the following structure (we use here form (\ref{row2}) for the first row and the third row) 
\begin{equation}\label{prob3}
\begin{pmatrix}
\begin{array}{cc|cc}
0&  X&   X&   X\\
X& X&\ldots&\ldots \\ \hline
X& 0&X& X\\
X& X&\ldots&\ldots
\end{array}
\end{pmatrix}
\end{equation}where at least two unspecified positions are occupied by a nonzero operator and an LHS part can be subtracted as well.

The remaining situation is related to the case where the first and the third rows are like (\ref{row3}-\ref{row4}). Take a convention in which 
\begin{equation}\nonumber
\begin{matrix}
\begin{array}{cc|cc}
   \sigma_1/\tilde{\sigma}_1&   \sigma_2/\tilde{\sigma}_2&  \sigma_3/\tilde{\sigma}_3 &  \sigma_4/\tilde{\sigma}_4,
\end{array}
\end{matrix}
\end{equation}stands for one of the following rows
\begin{equation}\nonumber
\begin{matrix}
\begin{array}{cc|cc}
   \sigma_1&   \sigma_2&  \sigma_3&  \sigma_4
\end{array}
\end{matrix},\ \begin{matrix}
\begin{array}{cc|cc}
 \tilde{\sigma}_1&   \tilde{\sigma}_2&  \tilde{\sigma}_3 & \tilde{\sigma}_4.
\end{array}
\end{matrix}
\end{equation}Firstly put $r_4=1$. There are two relevant cases to consider 
\begin{equation}\label{ex_1}
\begin{pmatrix}
\begin{array}{cc|cc}
\psi_1/\psi_I&   \psi_2/\psi_I&   X/\psi_I&   X/\psi_I\\
\ldots& \ldots&\ldots&\ldots \\ \hline
\psi_3/\psi_{II}&   \psi_4/\psi_{II}& X/\psi_{II} & X/\psi_{II}\\
\phi'_1& \phi'_2&\phi'_3&\phi'_4
\end{array}
\end{pmatrix}
\end{equation} and
\begin{equation}\label{ex_2}
\begin{pmatrix}
\begin{array}{cc|cc}
\psi_1/\psi_I&   \psi_2/\psi_I&   X/\psi_I&   X/\psi_I\\
\ldots& \ldots&\ldots&\ldots \\ \hline
X & X&\psi_3&   \psi_4\\
\phi'_1& \phi'_2&\phi'_3&\phi'_4
\end{array}
\end{pmatrix}
\end{equation}where $\phi'_i$ stands for operator proportional to state $|\phi_i\rangle \langle \phi_i|$ or operator equal to $0$. If assemblages (\ref{ex_1}-\ref{ex_2}) are obtained with $\left\{P_{a|0}\right\}_{a=0}^1,\left\{P_{a|1}\right\}_{a=0}^1$ which are the same (up to relabeling), one can subtract an LHS part. In other case form of the first two columns in (\ref{ex_1}) implies that $q_1=q_2=1$ which is in contradiction with Lemma \ref{lem1rank3} (as $\mathrm{rank}(\rho_{ABC})=3$). Similar contradiction is obtained for (\ref{ex_2}) if the first row consists of rank one operators. Due to (\ref{row1}-\ref{row5}) for columns only relevant case for (\ref{ex_2}) is given by 
\begin{equation}
\begin{pmatrix}
\begin{array}{cc|cc}
       \psi_1&   \psi_2&   X&   X\\
\color{red}X\color{black}& X&X&\color{red}X\color{black} \\ \hline
\color{red}X\color{black} & X&\psi_3&   \color{red}\psi_4\color{black}\\
\phi'_1& \phi'_2&\phi'_3&\phi'_4
\end{array}
\end{pmatrix}
\end{equation}which cannot be on the edge.

To consider the remaining case for $r_1=r_3=2$, put $r_4=2$. Without loss of generality, we may consider three particular examples for such situation
\begin{equation}\label{exx_1}
\begin{pmatrix}
\begin{array}{cc|cc}
\psi_1/\psi_{I}&   \psi_2/\psi_{I}&   X/\psi_{I}&   X/\psi_{I}\\
\ldots& \ldots&\ldots&\ldots \\ \hline
\psi_3/\psi_{II}& \psi_4/\psi_{II}&X/\psi_{II}&X/\psi_{II}\\
\psi_5/\psi_{III}&   \psi_6/\psi_{III}& X/\psi_{III} & X/\psi_{III}
\end{array}
\end{pmatrix},
\end{equation}
\begin{equation}\label{exx_2}
\begin{pmatrix}
\begin{array}{cc|cc}
\psi_1/\psi_I&   \psi_2/\psi_I&   X/\psi_I&   X/\psi_I\\
\ldots& \ldots&\ldots&\ldots \\ \hline
X&X&\psi_3& \psi_4\\
\psi_5/\psi_{III}&   \psi_6/\psi_{III}& X/\psi_{III} & X/\psi_{III}
\end{array}
\end{pmatrix}
\end{equation}and 
\begin{equation}\label{exx_3}
\begin{pmatrix}
\begin{array}{cc|cc}
\psi_1&   \psi_2&   X&   X\\
\ldots& \ldots&\ldots&\ldots \\ \hline
X&X&\psi_3& \psi_4\\
X & X&\psi_5&   \psi_6
\end{array}
\end{pmatrix}.
\end{equation}Once more, if assemblages (\ref{exx_1}-\ref{exx_2}) are obtained with $\left\{P_{a|0}\right\}_{a=0}^1,\left\{P_{a|1}\right\}_{a=0}^1$ which are the same (up to relabeling), one can subtract an LHS part - assume that it is not the case. Similar to the previous discussion the first two columns in (\ref{exx_1}) imply that $q_1=q_2=1$ which is in contradiction with Lemma \ref{lem1rank3}. The same is true for (\ref{exx_2}) if the first row and the fourth row consists of rank one operators. If the first row in (\ref{exx_2}) consists only of rank one operators and the fourth row have rank two operators, then the second row consists of rank two operators and one can subtract an LHS part from this structure (the same is true is the role of the first and the fourth rows are reversed) - this is true since in that case $q_3,q_4=2,3$ and related columns must be given by (\ref{row2}-\ref{row5}).

In the end, if both the first and the fourth row have rank two operators, only obstruction for not being on the edge would be given by 
\begin{equation}\label{exxx_2}
\begin{pmatrix}
\begin{array}{cc|cc}
\psi_1&   \psi_2&   X&   X\\
X& X&\phi_1&\phi_2 \\ \hline
X&X&\psi_3& \psi_4\\
\psi_5&   \psi_6& X & X
\end{array}
\end{pmatrix}.
\end{equation}However, note that (\ref{exxx_2}) is the same as (\ref{form2}) up to exchange of measurements on subsystems A and B (and relabeling). Therefore, by Lemma \ref{lem5rank3}, this case can be neglected.

Finally, by analogous arguments, one can see that the only structure of type (\ref{exx_3}) which may provide an obstacle for the subtraction of an LHS part is given by formula (\ref{form}), which once more according to Lemma \ref{lem5rank3} is irrelevant for $\rho_{ABC}$ with rank three. The proof is now completed in the case of nontrivial projective measurement.

Now let us consider the case of trivial projective measurements given by $0, \mathds{1}$. Firstly, assume that $P_{0|0}=Q_{0|0}=0$. Then $\Sigma$ has the first column and the first row which consists of zeros and due to no-signaling and normalization conditions one can see that there always exists an LHS part that can be subtracted. Therefore, without loss of generality if there exist at least one trivial measurement on both subsystems, obtained assemblage is not on the edge. If so, assume that measurement on the subsystem B are given by nontrivial projective measurements (we may put them to be different up to relabeling). Now put $P_{0|0}=P_{0|1}=0$. Clearly assemblage obtained by this measurements have the first row and the third row which consists of zeros, while the remaining rows are the same and they consist of operators obtained after measurements on the subsystem B performed on reduced state $\rho_{BC}=\mathrm{Tr}_A(\rho_{ABC})$. Note that $\mathrm{rank}(\rho_{BC})\geq 2$ since $\rho_{BC}=\rho_{0|0}+\rho_{1|0}$ and $\mathrm{max}\left\{r_1,r_2\right\}\geq 2$ by previous discussion (here both $\rho_{a|x}$ are assumed to be obtained by nontrivial projective measurement). Without loss of generality we may then assume that $\mathrm{rank}(\rho_{BC})=2$. In this case the relevant row is given by one of $(\ref{row2}-\ref{row5})$ and it is obvious that one can subtract an LHS part from the considered assemblage. To conclude this part of the proof, consider the case in which $P_{0|0}=\mathds{1}$ and $\left\{P_{a|1}\right\}_{a=0}^1$ is nontrivial. We may discuss only relevant case for which $r_3=2$ and the third row is given by one of $(\ref{row2}-\ref{row5})$. By no-signaling constraints for columns we see that assemblage obtained with such measurements is not on the edge, for example 
\begin{equation}\nonumber
\begin{pmatrix}
\begin{array}{cc|cc}
 \color{red}\psi_{1}\color{black}+\sigma_{10|10}&X+\sigma_{11|10}&\psi_2+\sigma_{10|11}&\color{red}X\color{black}+\sigma_{11|11}\\
0& 0&0&0\\ \hline
\color{red}\psi_{1}\color{black}&X&\psi_2&\color{red}X\color{black}\\
\sigma_{10|10} & \sigma_{11|10}&\sigma_{10|11}& \sigma_{11|11}
\end{array}
\end{pmatrix}.
\end{equation}Therefore, any no-signaling assemblage obtained from projective measurements performed on rank three state is not on the edge.

Finally, we can extend the proof in order to cover the case of all POVMs. Consider once more an assemblage $\sigma_{ab|xy}=\mathrm{Tr}_{AB}(M_{a|x}\otimes N_{b|y} \otimes \mathds{1}\rho_{ABC})$ where $M_{a|x}$, $N_{b|y}$ are elements of nontrivial POVMs acting on the subsystem A and B respectively. In this case 
\begin{align*}
   M_{0|0}&= m_{0|0}P_{0|0} + m_{1|0}P_{1|0}\\ 
   M_{1|0}&= (1-m_{0|0})P_{0|0} + (1-m_{1|0})P_{1|0}\\
   M_{0|1}&= m_{0|1}P_{0|1}+m_{1|1}P_{1|1}\\
   M_{1|1}&= (1-m_{0|1})P_{0|1} + (1-m_{1|1})P_{1|1}
\end{align*}
where $0<m_{0|0}, m_{1|0}, m_{0|1}, m_{1|1} <1$ and $P_{a|x}$ are orthogonal projections coming from spectral decomposition of appropriate $M_{a|x}$. Define $\alpha= \textnormal{min} \{m_{0|0},1-m_{1|0},m_{0|1},1-m_{1|1} \}>0$. Using previous equalities we can write 
$M_{a|x}=\alpha P_{a|x} + \tilde{M}_{a|x}$ for all $a,x=0,1$ where each $\tilde{M}_{a|x}$ is a positive operator. By analogous reasoning we may write $N_{b|y} = \beta Q_{b|y} + \tilde{N}_{b|y}$ for all $b,y =0,1$ and some $\beta > 0$. Inserting the above expressions for the POVMs into the definition of $\sigma_{ab|xy}$, we get $\sigma_{ab|xy}=\alpha \beta \tilde{\sigma}_{ab|xy} + \sigma_{ab|xy}^{\textnormal{junk}}$ where $\tilde{\sigma}_{ab|xy} = \mathrm{Tr}_{AB}(P_{a|x}\otimes Q_{b|y} \otimes \mathds{1}\rho_{ABC})$ is an assemblage obtained by nontrivial projective measurements realized on a tripartite state $\rho_{ABC}$ with $\mathrm{rank} (\rho_{ABC}) \geq 3$ and $\sigma_{ab|xy}^{\textnormal{junk}}$ is some irrelevant no-signaling assemblage (up to normalization equal to $1- \alpha \beta$). We know that $\tilde{\sigma}_{ab|xy}$ is not on the edge (we can subtract an LHS part from it) and therefore the initial assemblage $\sigma_{ab|xy}$ is neither on the edge. Using similar reasoning in the case with some trivial POVMs, we obtain the same outcome based on previous results. Therefore, the proof of Theorem \ref{thm_main} is concluded.
\end{proof}

\end{document}